 



\documentclass[letterpaper,11pt,onecolumn]{article}

\usepackage[]{epsf,epsfig, amsmath,amssymb,amsthm, latexsym, color,graphicx, xspace, url}
\graphicspath{{Figures/}}

\usepackage{times,euscript, balance}

\usepackage{xspace}

\setlength{\textwidth}{6.5in}
\setlength{\evensidemargin}{0.0in}
\setlength{\oddsidemargin}{0.0in}
\setlength{\textheight}{8.5in}
\setlength{\topmargin}{-0.25in}
\setlength{\parskip}{2mm}
\setlength{\baselineskip}{1.7 \baselineskip}


\newtheorem{theorem}{Theorem}[section]
\newtheorem{lemma}[theorem]{Lemma}

\newtheorem{corollary}[theorem]{Corollary}
\newtheorem{definition}[theorem]{Definition}

\def\eps{{\varepsilon}}
\def\polylog{{\rm polylog}}
\def\A{\EuScript{A}}

\def\D{\EuScript{D}}

\def\F{\EuScript{F}}

\def\M{\EuScript{M}}

\def\P{\EuScript{P}}

\def\S{\EuScript{S}}
\def\T{\EuScript{T}}

\def\V{\EuScript{V}}
\def\W{\EuScript{W}}

\def\etal{\textit{et~al.}}

\def\disc{{\rm{disc}}}
\def\reals{{\mathbb R}}
\def\integers{{\mathbb Z}}

\def\uu{{\bf u}}
\def\vv{{\bf v}}
\def\ww{{\bf w}}

\DeclareMathOperator{\EE}{\mathbf{Exp}}
\DeclareMathOperator{\Var}{\mathbf{Var}}
\DeclareMathOperator{\Prob}{\mathbf{Prob}}

\newcommand{\f}{\mathsf{f}}

\newcommand{\ceil}[1]{\left\lceil {#1} \right\rceil}
\newcommand{\floor}[1]{\left\lfloor {#1} \right\rfloor}

\newdimen\instindent
\def\institute#1{\gdef\@institute{#1}}

 \newfont{\affaddr}{phvr at 11pt}
 \newfont{\affaddrit}{phvro at 11pt} 

\usepackage{graphicx,floatflt,psfrag}



\begin{document}

\title{Shallow Packings in Geometry \thanks{%
    Work on this paper has been supported by NSF under grant CCF-12-16689 and CCF-11-17336. 
  }
}

\author{Esther Ezra
  \thanks{%
    Department of Computer Science and Engineering,
    Polytechnic Institute of NYU, Brooklyn, NY~11201-3840, USA;
    and School of Mathematics,
    Georgia Institute of Technology, Atlanta, Georgia 30332, USA;
    \texttt{esther@courant.nyu.edu}
  }
} 


%
%
%


\maketitle

\begin{abstract}
  We refine the bound on the packing number, originally shown by Haussler, for shallow geometric set systems.
  Specifically, let $\V$ be a finite set system defined over an $n$-point set $X$; we view $\V$ as a set of indicator vectors over the 
  $n$-dimensional unit cube.
  A $\delta$-separated set of $\V$ is a subcollection $\W$, s.t. the Hamming distance between each pair $\uu, \vv \in \W$ is greater than
  $\delta$, where $\delta > 0$ is an integer parameter. The $\delta$-packing number is then defined as the cardinality of the largest 
  $\delta$-separated subcollection of $\V$. Haussler showed an asymptotically tight bound of $\Theta((n/\delta)^d)$ 
  on the $\delta$-packing number if $\V$ has VC-dimension (or \emph{primal shatter dimension}) $d$. 
  We refine this bound for the scenario where, for any subset, $X' \subseteq X$ of size $m \le n$ and for any parameter $1 \le k \le m$, 
  the number of vectors of length at most $k$ in the restriction of $\V$ to $X'$ is only $O(m^{d_1} k^{d-d_1})$, for a fixed integer $d > 0$ 
  and a real parameter $1 \le d_1 \le d$ (this generalizes the standard notion of \emph{bounded primal shatter dimension} when $d_1 = d$).
  In this case when $\V$ is ``$k$-shallow'' (all vector lengths are at most $k$), we show that its $\delta$-packing number 
  is $O(n^{d_1} k^{d-d_1}/\delta^d)$, matching Haussler's bound for the special cases where $d_1=d$ or $k=n$. 
  As an immediate consequence we conclude that set systems of halfspaces, balls, and 
  parallel slabs defined over $n$ points in $d$-space admit better packing numbers when $k$ is smaller than $n$.
  Last but not least, we describe applications to (i) spanning trees of low total crossing number, 
  and (ii) geometric discrepancy, based on previous work by the author.
\end{abstract}

\section{Introduction}
\label{sec:intro}

Let $\V$ be a set system defined over an $n$-point set $X$.
We follow the notation in~\cite{Haussler-95}, and view $\V$ as a set of indicator vectors in 
${\reals}^n$, that is, $\V \subseteq \{0,1\}^n$.
Given a subsequence of indices (coordinates) $I = (i_1, \ldots, i_k)$, $1 \le i_j \le n$, $k \le n$,
the \emph{projection $\V_{|_I}$} of $\V$ onto $I$ (also referred to as the \emph{restriction} of $\V$ to $I$) is defined 
as
$$
\V_{|_I} = \left\{ (\vv_{i_1}, \ldots, \vv_{i_k}) \mid \vv = (\vv_1, \ldots, \vv_n) \in \V \right\} .
$$
With a slight abuse of notation we write $I \subseteq [n]$ to state the fact that $I$ is a subsequence of indices as above.
We now recall the definition of the primal shatter function of $\V$:

\begin{definition}[Primal Shatter Function~\cite{HW87, Mat-99}]
  The \emph{primal shatter function} of $\V \subseteq \{0,1\}^n$ is a function, denoted by $\pi_{\V}$, whose value
  at $m$ is defined by
  $\pi_{\V}(m) = \max_{I \subseteq [n], |I| = m} |\V_{|_I}|$.
  In other words, $\pi_{\V}(m)$ is the maximum possible number of distinct vectors of $\V$
  when projected onto a subsequence of $m$ indices.
\end{definition}

From now on we say that $\V \subseteq \{0, 1\}^n$ 
has \emph{primal shatter dimension $d$} if $\pi_{\V}(m) \le C m^d$, for all $m \le n$,
where $d > 1$ and $C > 0$ are constants.
A notion closely related to the primal shatter dimension is that of the \emph{VC-dimension}:

\begin{definition}[VC-dimension~\cite{Haussler-95, VC-71}]
  An index sequence $I = (i_1,\ldots, i_k)$ is \emph{shattered} by $\V$ if $\V_{|_I} = \{0,1\}^k$.
  The \emph{VC-dimension} of $\V$, denoted by $d_0$ is the size of the longest sequence $I$ shattered by $\V$. 
  That is, $d_0 = \max \{k \mid \exists I = (i_1, i_2, \ldots, i_k), 1 \le i_j \le n, \ \mbox{with}\ \V_{|_I} = \{0,1\}^k \}$.
\end{definition}

The notions of primal shatter dimension and VC-dimension are interrelated.
By the Sauer-Shelah Lemma (see~\cite{Sauer-72, Shelah-72} and the discussion below)
the VC-dimension of a set system $\V$ always bounds its primal shatter dimension, that is, $d \le d_0$. 
On the other hand, when the primal shatter dimension is bounded by $d$, the VC-dimension $d_0$
does not exceed $O(d\log{d})$ (which is straightforward by definition, see, e.g.,~\cite{Har-Peled-11}).

A typical family of set systems that arise in geometry with bounded primal shatter (resp., VC-) dimension consists 
of set systems defined over points in some low-dimensional space ${\reals}^d$, where $\V$ represents a 
collection of certain simply-shaped regions, e.g., halfspaces, balls, or simplices in ${\reals}^d$.
In such cases, the primal shatter (and VC-) dimension is a function of $d$; 
see, e.g.,~\cite{Har-Peled-11} for more details.
When we flip the roles of points and regions, we obtain the so-called \emph{dual set systems} (where 
we refer to the former as \emph{primal set systems}). In this case, the ground set is a collection $\S$
of algebraic surfaces in $\reals^d$, and
$\V$ corresponds to faces of all dimensions in the \emph{arrangement} $\A(\S)$ of $\S$,
that is, 
this is the decomposition of $\reals^d$ into connected open \emph{cells} of dimensions $0, 1, \ldots, d$ induced by $\S$. 
Each cell is a maximal connected region that is contained in the intersection of a fixed number of the surfaces and avoids all other surfaces;
in particular, the $0$-dimensional cells of $\A(S)$ are called ``vertices'', 
and $d$-dimensional cells are simply referred to as ``cells''; see~\cite{SA-95} for more details. 
The distinction between primal and dual set systems in geometry is essential,
and set systems of both kinds appear in numerous geometric applications, see, once again~\cite{Har-Peled-11} and 
the references therein.
 
\paragraph*{$\delta$-packing.}

The \emph{length} $\|\vv\|$ of a vector $\vv \in \V$ under the $L^1$ norm is defined as 
$\sum_{i=1}^n |\vv_i|$, where $\vv_i$ is the $i$th coordinate of $\vv$, $i=1, \ldots, n$.
The \emph{distance} $\rho(\uu,\vv)$ between a pair of vectors $\uu, \vv \in \V$ is
defined as the $L^1$ norm of the difference ${\uu - \vv}$, that is,
$\rho(\uu,\vv) = \sum_{i=1}^n |\uu_i - \vv_i|$.
In other words, it is the \emph{symmetric difference distance} between the corresponding sets represented by $\uu$, $\vv$.

Let $\delta > 0$ be an integer parameter.
We say that a subset of vectors $\W \subseteq \{0,1\}^n$ is \emph{$\delta$-separated} if for each pair $\uu, \vv \in \W$,
$\rho(\uu,\vv) > \delta$.
The \emph{$\delta$-packing number} for $\V$, denote it by $\M(\delta, \V)$, is then defined as the cardinality of the largest 
$\delta$-separated subset $\W \subseteq \V$.
A key property, originally shown by Haussler~\cite{Haussler-95} (see also~\cite{Chazelle-92, CW-89, Dudley-78, Mat-99, Welzl-92}), is that 
set systems of bounded primal shatter dimension admit small $\delta$-packing numbers.
That is:

\begin{theorem}[Packing Lemma~\cite{Haussler-95, Mat-99}]
  \label{thm:packing}
  Let $\V \subseteq \{0,1\}^n$ be a set of indicator vectors of primal shatter dimension $d$, and
  let $1 \le \delta \le n$ be an integer parameter.
  Then
  $\M(\delta, \V) = O((n/\delta)^d)$,
  where the constant of proportionality depends on $d$.
\end{theorem}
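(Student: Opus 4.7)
The plan is to follow Haussler's probabilistic argument. Let $\W \subseteq \V$ be a $\delta$-separated subcollection of maximum cardinality, so $M := |\W| = \M(\delta, \V)$, and let $s$ be an integer parameter to be tuned (ultimately $s = \Theta(n/\delta)$). The strategy is to sample a random subset (or multiset) $A$ of $s$ coordinates from $[n]$, study the projection $\pi_A \colon \W \to \W_{|_A}$, and estimate its image in two competing ways whose combination forces an upper bound on $M$.

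On the upper-bound side, the shatter-dimension hypothesis applied to the restricted system gives $|\W_{|_A}| \le \pi_{\V}(s) \le C s^d$ deterministically. On the lower-bound side, the $\delta$-separation forces projections to be mostly injective: for any pair $\uu \ne \vv \in \W$ the symmetric difference of their supports has size more than $\delta$, so the probability that a single random coordinate fails to distinguish $\uu$ and $\vv$ is at most $1 - \delta/n$. Independence across the $s$ sampled coordinates then yields $\Prob[\pi_A(\uu) = \pi_A(\vv)] \le (1 - \delta/n)^s \le e^{-s\delta/n}$.

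These two estimates I would link via a convexity argument on fiber sizes. Writing the fiber sizes of $\pi_A$ as $m_1, \ldots, m_t$ with $t = |\W_{|_A}|$ and $\sum_i m_i = M$, the number of colliding pairs $Y = \sum_i \binom{m_i}{2}$ admits a deterministic lower bound $Y \ge M^2/(4t)$ whenever $M \ge 2t$ (by convexity of $x \mapsto \binom{x}{2}$ given a fixed number of classes and a fixed total), and a probabilistic upper bound $\EE[Y] \le \binom{M}{2} e^{-s\delta/n}$ (by linearity of expectation applied pairwise). Substituting $t \le C s^d$ and tuning $s = \Theta(n/\delta)$ so that the exponential factor becomes a suitable constant makes the two inequalities compatible only when $M = O(s^d) = O((n/\delta)^d)$, which is the claimed bound.

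The main obstacle will be keeping the sample size $s$ truly proportional to $n/\delta$ rather than $(n/\delta)\log(n/\delta)$: a naive union bound over all pairs of $\W$ introduces an extra logarithmic factor, and removing it is the delicate ingredient of Haussler's argument, which works with expected counts of unique projections and a carefully calibrated choice of $s$ (possibly via a recursive or martingale refinement) rather than with pairwise collision events in isolation. Apart from this subtlety, every step is a direct application of the shatter-dimension bound, the exponential collision estimate, and elementary convexity; the constant hidden in the final $O((n/\delta)^d)$ depends on $d$ through $C$ and through the calibration of $s$.
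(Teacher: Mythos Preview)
Your convexity step does not close the gap you yourself flag at the end. Suppose $M > 2Cs^d$; then for every sample $A$ you have $t \le Cs^d < M/2$, so $Y \ge M^2/(4t) \ge M^2/(4Cs^d)$ deterministically, while $\EE[Y] \le \binom{M}{2}e^{-s\delta/n} < (M^2/2)e^{-s\delta/n}$. Combining these two inequalities the factor $M^2$ cancels and you are left with $e^{s\delta/n} \le 2Cs^d$, a relation in which $M$ no longer appears. To violate it (and thereby force $M \le 2Cs^d$) you must take $s$ large enough that $s\delta/n > \ln(2C) + d\ln s$, which forces $s = \Theta\bigl((n/\delta)\log(n/\delta)\bigr)$ and hence $M = O\bigl((n/\delta)^d\log^d(n/\delta)\bigr)$. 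This is Dudley's bound, not Haussler's; the convexity trick buys nothing over the straightforward ``expected number of uniquely projected vectors'' argument. You correctly identify the missing logarithm as \emph{the} obstacle, but your proposal contains no mechanism for removing it --- ``possibly via a recursive or martingale refinement'' is not a proof.

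Haussler's actual argument, as the paper recounts in Appendix~\ref{app:Haussler_proof}, is structurally different from a collision count. One forms the \emph{unit distance graph} on $\W$ (edges between vectors at Hamming distance exactly $1$) and uses the combinatorial fact that its edge density is at most the VC-dimension $d_0$. This feeds into a conditional-variance identity: for a random $m$-subset $I' = (i_1,\dots,i_m)$ one bounds $\EE_{I'}[\Var(\W_{i_m}\mid \W_{i_1},\dots,\W_{i_{m-1}})]$ from above by $d_0/m$ (via the density bound) and from below by $\frac{\delta}{2(n-m+1)}\bigl(1 - \EE_I[|\W_{|_I}|]/|\W|\bigr)$ (via $\delta$-separation). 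Equating and choosing $m = \Theta(d_0 n/\delta)$ yields the clean inequality $|\W| \le (d_0+1)\,\EE_I[|\W_{|_I}|]$, after which the primal-shatter bound $|\W_{|_I}| \le Cm^d$ gives $|\W| = O((n/\delta)^d)$ with no logarithm. The unit-distance-graph density bound is the ingredient your plan lacks; pairwise collision probabilities alone cannot see it.
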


We note that in the original formulation in~\cite{Haussler-95} the assumption is that the set system has a finite VC-dimension. 
However, its formulation in~\cite{Mat-99}, which is based on a simplification of the analysis of Haussler by Chazelle~\cite{Chazelle-92},
relies on the assumption that the primal shatter dimension is $d$, which is the actual bound that we state in Theorem~\ref{thm:packing}.
We also comment that a closer inspection of the analysis in~\cite{Haussler-95} shows that this assumption can be replaced with that of 
having bounded primal shatter dimension (independent of the analysis in~\cite{Chazelle-92}). We describe these considerations in 
Section~\ref{sec:prelim}.


\subparagraph*{Previous work.}

In his seminal work, Dudley~\cite{Dudley-78} presented the first application of \emph{chaining}, a proof technique due to Kolmogorov,
to empirical process theory, where he showed the bound $O((n/\delta)^{d_0} \log^{d_0}{(n/\delta)})$ on $\M(\delta, \V)$, 
with a constant of proportionality depending on the VC-dimension $d_0$ (see also previous work by Haussler~\cite{Haussler-92} and 
Pollard~\cite{Pollard-84} for an alternative proof and a specification of the constant of proportionality).
This bound was later improved by Haussler~\cite{Haussler-95}, who showed $\M(\delta, \V) \le e(d_0 + 1)\left(\frac{2e n}{\delta}\right)^{d_0}$ 
(see also Theorem~\ref{thm:packing}), and presented a matching lower bound, which leaves only a constant factor gap, which depends exponentially in $d_0$. 
In fact, the aforementioned bounds are more general, and can also be applied to classes of real-valued functions of finite ``pseudo-dimension'' 
(the special case of set systems corresponds to Boolean functions), see, e.g.,~\cite{Haussler-92}, however, 
we do not discuss this generalization in this paper and focus merely on set systems $\V$ of finite primal shatter (resp., VC-) dimension.

The bound of Haussler~\cite{Haussler-95} (Theorem~\ref{thm:packing}) is in fact a generalization of the so-called 
Sauer-Shelah Lemma~\cite{Sauer-72, Shelah-72}, asserting that 
$|\V| \le (en /d_0)^{d_0}$, where $e$ is the base of the natural logarithm, and thus this bound is $O(n^{d_0})$.
Indeed, when $\delta = 1$, the corresponding $\delta$-separated set should include all vectors in $\V$, 
and then the bound of Haussler~\cite{Haussler-95} becomes $O(n^{d_0})$, matching the Sauer-Shelah bound up to a constant factor
that depends on $d_0$. 

There have been several studies extending Haussler's bound or improving it in some special scenarios.
We name only a few of them.
Gottlieb~\etal~\cite{GKM-12} presented a sharpening of this bound when $\delta$ is relatively large, i.e., $\delta$
is close to $n/2$, in which case the vectors are ``nearly orthogonal''. They also presented a tighter lower bound,
which considerably simplifies the analysis of Bshouty~\etal~\cite{BLL-09}, who achieved the same tightening.

A major application of packing is in obtaining improved bounds on the \emph{sample complexity} in machine learning.
This was studied by Li~\etal~\cite{LLS-01} (see also~\cite{Haussler-92}), who presented an asymptotically tight bound on the sample complexity,
in order to guarantee a small ``relative error.'' This problem has been revisited by Har-Peled and Sharir~\cite{HS-11} in the context of geometric set 
systems, where they referred to a sample of the above kind as a ``relative approximation'' (discussed in Appendix~\ref{app:compact_rep}), 
and showed how to integrate it into an \emph{approximate range counting} machinery, which is a central application in computational geometry.
The packing number has also been used by Welzl~\cite{Welzl-92} in order to construct spanning trees of low 
crossing number (see also~\cite{Mat-99}) and by Matou{\v s}ek~\cite{Mat-95, Mat-99} 
in order to obtain asymptotically tight bounds in geometric discrepancy.
We discuss these applications in the context of the problem studied in this paper in Section~\ref{sec:applications}. 

\paragraph*{Our result.}

In the sequel, we refine the bound in the Packing Lemma (Theorem~\ref{thm:packing}) so that it becomes sensitive to the length 
of the vectors $\vv \in \V$, based on an appropriate refinement of the underlying primal shatter function.
This refinement has several geometric realizations.
Our ultimate goal is to show that when the set system is ``shallow'' (that is, the underlying vectors are short), the packing number
becomes much smaller than the bound in Theorem~\ref{thm:packing}.

Nevertheless, we cannot always enforce such an improvement, as in some settings the worst-case asymptotic bound on the packing number 
is $\Omega((n/\delta)^d)$ even when the set system is shallow.
We demonstrate such a scenario by considering dual set systems of axis-parallel rectangles and points in the plane, where one can have
a large subcollection $\F$ that is both $\delta$-separated and $\delta$-shallow. 
In this case $|\F| = \Omega((n/\delta)^2)$, which is not any better
than the ``standard'' bound (stated in Theorem~\ref{thm:packing}) obtained without the shallowness assumption. 
See Figure~\ref{fig:rectangles} and Appendix~\ref{app:rectangles}, where we give a more detailed description of 
this construction to the non-expert reader.


Therefore, in order to obtain an improvement on the packing number of shallow set systems, 
we may need further assumptions on the primal shatter function. 
Such assumptions stem from  the random sampling technique of Clarkson and Shor~\cite{CS-89},
which we define as follows.
Let $\V$ be our set system.
We assume that for any sequence $I$ of $m \le n$ indices,
and for any parameter $1 \le k \le m$, the number of vectors in $\V_{|_{I}}$
of length at most $k$ is only $O(m^{d_1} k^{d-d_1})$, where $d$ is the primal shatter dimension and
$1 \le d_1 \le d$ is a real parameter.\footnote{We ignore the cases where $d_1 < 1$, 
  as it does not seem to appear in natural set systems---see below.}
When $k = m$ we obtain $O(m^{d})$ vectors in total, in accordance with the assumption that the primal 
shatter dimension is $d$, but the above bound is also sensitive to the length of the vectors as long as $d_1 < d$.
From now on, we say that a primal shatter function of this kind has the \emph{$(d, d_1)$ Clarkson-Shor property}.
 

Let us now denote by $\M(\delta, k, \V)$ the $\delta$-packing number of $\V$, where the vector length of each element in $\V$
is at most $k$, for some integer parameter $1 \le k \le n$. 
By these assumptions, we can assume, without loss of generality, that $k \ge \delta/2$, as otherwise the distance between any two 
elements in $\V$ must be strictly less than $\delta$, in which case the packing is empty.
We also assume that $\delta \le n/2^{(d_0+1)}$
(where $d_0$ is the VC-dim), as otherwise the bound on the packing number is a constant that depends on $d$ and $d_0$ by the Packing Lemma 
(Theorem~\ref{thm:packing}). The choice of this threshold is justified in Section~\ref{sec:analysis} where we present the analysis and show our 
main result, which we state below:   

\begin{theorem}[Shallow Packing Lemma]
  \label{thm:shallow_packing}
  Let $\V \subseteq \{0,1\}^n$ be a set of indicator vectors, whose primal shatter function has a $(d,d_1)$ Clarkson-Shor property, 
  and whose VC-dim is $d_0$.
  Let $\delta$ be an integer parameter between $1$ and $n/2^{(d_0 + 1)}$, $k$ an integer parameter between $1$ and $n$, and suppose that
  $k \ge \delta/2$. Then:
  $$
  \M(\delta, k, \V) = O\left( \frac{ n^{d_1} k^{d-d_1} }{\delta^d} \right) ,
  $$
  where the constant of proportionality depends on $d$.
\end{theorem}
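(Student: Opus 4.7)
The plan is to adapt Haussler's probabilistic argument underlying the Packing Lemma (Theorem~\ref{thm:packing}) so that it simultaneously exploits shallowness ($\|\vv\|\le k$ for every $\vv\in\W$) and the refined Clarkson-Shor form of the primal shatter function. Fix a $\delta$-separated subcollection $\W\subseteq\V$ of maximum size and write $N=|\W|$, the quantity I want to bound by $O(n^{d_1}k^{d-d_1}/\delta^d)$. The core step is to draw a random index set $R\subseteq[n]$ of size $s=\Theta(n/\delta)$ and to estimate $|\W_{|_R}|$ both from above (via the Clarkson-Shor property on the restriction $\V_{|_R}$) and from below (using the $\delta$-separation of $\W$).

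For the upper bound, the key observation is that shallowness is preserved on average after sampling: for each $\vv\in\W$, $\EE_R[\|\vv_{|_R}\|]=\|\vv\|\,s/n\le ks/n$. Setting $k^\ast:=2ks/n=O(k/\delta)$, Markov's inequality gives $\Prob_R[\|\vv_{|_R}\|>k^\ast]\le 1/2$, so in expectation at least $N/2$ vectors of $\W$ land in the short slice $\W^\ast:=\{\vv\in\W:\|\vv_{|_R}\|\le k^\ast\}$. The distinct projections of $\W^\ast$ are vectors of length at most $k^\ast$ inside $\V_{|_R}$, so the $(d,d_1)$ Clarkson-Shor property applied to this $s$-coordinate restriction gives
\[
|\W^\ast_{|_R}|\;\le\;\bigl|\{\uu\in\V_{|_R}:\|\uu\|\le k^\ast\}\bigr|
\;=\;O\!\left(s^{d_1}(k^\ast)^{d-d_1}\right)
\;=\;O\!\left(\tfrac{n^{d_1}k^{d-d_1}}{\delta^d}\right),
\]
which is already the target bound.

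For the lower bound, I would invoke the Haussler-Chazelle argument: a pair $\uu\ne\vv\in\W$ collides under the projection ($\uu_{|_R}=\vv_{|_R}$) with probability at most $(1-\delta/n)^s\le e^{-c}$ when $s=\Theta(n/\delta)$, and a careful weighted argument on the equivalence classes of $\W_{|_R}$ shows $\EE_R[|\W_{|_R}|]$ to be a constant fraction of $N$. Averaging, there exists a single sample $R$ that simultaneously realizes $|\W^\ast|\ge N/2$ and $|\W_{|_R}|=\Omega(N)$; a pigeonhole/double-counting step then forces $\Omega(N)$ of the distinct projections to lie in the short slice, giving $\Omega(N)\le |\W^\ast_{|_R}|=O(n^{d_1}k^{d-d_1}/\delta^d)$ and the theorem.

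The main obstacle is executing the Haussler lower bound at $s=\Theta(n/\delta)$ without incurring an extra $\log N$ factor: a naive union bound forces $(1-\delta/n)^s=O(1/N)$ and hence $s=\Omega((n/\delta)\log N)$, which would introduce a parasitic $\log^d$ loss in the final estimate. Avoiding this is precisely the content of Haussler's and Chazelle's sharpenings, and I expect the hypothesis $\delta\le n/2^{d_0+1}$ in the theorem statement to enter here via the Sauer-Shelah bound $N\le(en/d_0)^{d_0}$, tying the admissible sample size to the size of $\W$ so that the constant-factor lower bound emerges cleanly and the final bound is reached without extraneous factors.
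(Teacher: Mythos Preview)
Your overall strategy is sound, and in fact it is \emph{simpler} than the paper's own proof. The paper does not combine Haussler's inequality with a single Markov step as you do. Instead, it first obtains (via an $\eps$-net / relative-approximation argument) a preliminary bound $|\W|=O\bigl(n^{d_1}k^{d-d_1}\delta^{-d}\log^{d}(n/\delta)\bigr)$, and then runs a $\log^{*}(n/\delta)$-step bootstrapping: at iteration $j$ it samples $m_j=\Theta\bigl((n/\delta)\log^{(j)}(n/\delta)\bigr)$ indices, uses a Chernoff-type bound (the Exponential Decay Lemma) to show that only a $1/(\log^{(j-1)}(n/\delta))^{D}$ fraction of vectors project to ``long'' vectors, and feeds Haussler's inequality $|\W|\le (d_0+1)\EE[|\W_{|_{I_j}}|]$ to obtain a recursion that peels off one iterated logarithm per step. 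The hypothesis $\delta\le n/2^{d_0+1}$ is used there to make the recursion contract. Your one-shot argument bypasses all of this; the iterated-log machinery and the Chernoff bound are unnecessary once one observes that the expected number of long vectors already suffices.

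That said, two points in your write-up need tightening. First, the constant in your Markov threshold is too small. With $k^{\ast}=2ks/n$ you get $\EE[N-|\W^{\ast}|]\le N/2$, while Haussler's inequality gives only $\EE[|\W_{|_R}|]\ge N/(d_0+1)$; since $1/(d_0+1)\le 1/2$, the difference can be nonpositive and the pigeonhole step fails. Take instead $k^{\ast}=C\cdot ks/n$ with $C>d_0+1$ (say $C=2(d_0+1)$). Then, using $|\W^{\ast}_{|_R}|=|\W_{|_R}|-|(\W\setminus\W^{\ast})_{|_R}|\ge |\W_{|_R}|-(N-|\W^{\ast}|)$, linearity of expectation gives
\[
\EE\bigl[|\W^{\ast}_{|_R}|\bigr]\;\ge\;\frac{N}{d_0+1}-\frac{N}{C}\;=\;\Omega(N),
\]
and the deterministic Clarkson--Shor upper bound on $|\W^{\ast}_{|_R}|$ (valid for every $R$) finishes the proof. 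Note you never need to ``find a single $R$ realizing both events simultaneously''; working with $\EE[|\W^{\ast}_{|_R}|]$ directly is cleaner.

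Second, your description of how the lower bound $\EE[|\W_{|_R}|]\ge N/(d_0+1)$ is obtained is not accurate. It does not come from bounding pairwise collision probabilities and a ``weighted argument on equivalence classes''; that route is precisely what produces the parasitic $\log N$ you worry about. Haussler's actual mechanism is a conditional-variance argument together with the edge-density bound $|E|/|V|\le d_0$ for the unit-distance graph on $\W$ (equivalently, Chazelle's simplification), and it yields $|\W|\le (d_0+1)\EE_I[|\W_{|_I}|]$ at sample size $m=\Theta(d_0 n/\delta)$ with no logarithmic loss. You should invoke that inequality as a black box rather than sketch an incorrect derivation. Your guess about the role of $\delta\le n/2^{d_0+1}$ is also off: in the paper it controls the convergence of the $\log^{*}$ recursion, whereas in your argument the only requirement is $m\le n$, which is much milder.
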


This problem has initially been addressed by the author in~\cite{Ezra-14} as a major tool to obtain size-sensitive discrepancy bounds in set
systems of this kind, where it has been shown $\M(\delta, k, \V) = O\left( \frac{ n^{d_1}  k^{d-d_1} \log^{d}{(n/\delta)}}{\delta^{d}} \right)$.
The analysis in~\cite{Ezra-14} is a refinement over the technique of Dudley~\cite{Dudley-78} combined with the existence of small-size
\emph{relative approximations} (see~\cite{Ezra-14} and Appendix~\ref{app:compact_rep} for more details).
In the current analysis we completely remove the extra $\log^{d}{(n/\delta)}$ factor appearing in the previous bound.
In particular, when $d_1 = d$ (where we just have the original assumption on the primal shatter function) or $k=n$ (in which 
case each vector in $\V$ has an arbitrary length), our bound matches 
the tight bound of Haussler, 
and thus appears as a generalization of the Packing Lemma (when replacing VC-dimension by primal shatter dimension).

Theorem~\ref{thm:shallow_packing} implies smaller packing numbers for several natural geometric set systems
under the shallowness assumption. These set systems are described in detail in Section~\ref{sec:geom_settings}.

Next, in Section~\ref{sec:spanning_trees} we present an application of Theorem~\ref{thm:shallow_packing} to ``spanning trees with low total conflict number'',
which is based on the machinery of Welzl~\cite{Welzl-92} to construct spanning trees of low crossing number (see also~\cite{Mat-99}).
Here the tree spans $\V$ (representing, say, a set of regions defined over $n$ points in $d$-space), and the ``conflict number'' 
of an edge $(u,v)$ is the symmetric difference distance between $u$ and $v$. Based on this structure we introduce a general framework to efficiently 
compute various measures arising in geometric optimization (e.g., diameter, width, radius of the smallest enclosing ball, volume of the minimum bounding box, 
etc.) in each region represented by $\V$, where the key idea is to keep the overall number of updates small (given a spanning tree of the above kind).
In Section~\ref{sec:geom_disc} we show an application in geometric discrepancy, where the goal is to obtain discrepancy bounds 
that are sensitive to the length of the vectors in $\V$. 
Due to the bound in Theorem~\ref{thm:shallow_packing} we obtain an improvement over the one presented in~\cite{Ezra-14}.

In Section~\ref{sec:conclusions} we discuss the geometric interpretation of Theorem~\ref{thm:shallow_packing} to dual set systems.
In particular, we draw the connection between shallow packings and \emph{shallow cuttings}~\cite{Mat-92}. 

Beyond the geometric applications, this paper is primarily an extension of Haussler's technique~\cite{Haussler-95} to shallow set systems.
We note that whereas the analysis of Dudley~\cite{Dudley-78} is fairly simple and intuitive, 
the analysis of Haussler~\cite{Haussler-95} is much more intricate, 
and thus the initial effort in this study was to understand Haussler's analysis, 
whose conclusions are summarized in Appendix~\ref{app:Haussler_proof}. We are also aware of the simplification to Haussler's proof by 
Chazelle~\cite{Chazelle-92}, nevertheless, we had to use the observations made in~\cite{Haussler-95} in order to proceed with
our analysis. Our main conclusion about the analysis in~\cite{Haussler-95} is given in Inequality~(\ref{eq:packing_bound}), 
which implies that the cardinality of a $\delta$-separated set $\V$ is bounded (up to a factor of $(d_0+1)$) by the expected number
of vectors in the projection of $\V$ onto a random sample of $O(d_0 n/\delta)$ indices, where $d_0$ is the VC-dimension. 
Although this simple observation is not explicitly stated in~\cite{Haussler-95}, this relation is a key property used in our
analysis. 

\section{Preliminaries}
\label{sec:prelim}

\paragraph*{Overview of Haussler's Approach.}

For the sake of completeness, we repeat some of the details in the analysis of Haussler~\cite{Haussler-95} 
and use similar notation for ease of presentation.

Let $\V \subseteq \{0,1\}^n$ be a collection of indicator vectors of bounded primal shatter dimension $d$,
and denote its VC-dimension by $d_0$. By the discussion above, $d_0 = O(d\log{d})$.
From now on we assume that $\V$ is $\delta$-separated, and thus a bound on $|\V|$ is also a bound on the packing
number of $\V$.
The analysis in~\cite{Haussler-95} exploits the method of ``conditional variance'' in order to conclude
%
\begin{equation}
  \label{eq:packing_bound}
  |\V| \le (d_0+1)\EE_{I}\left[|\V_{|_{I}}|\right] = O\left(d \log{d} \EE_{I}\left[ |\V_{|_{I}}| \right] \right) ,
\end{equation}
where $\EE_{I}\left[|\V_{|_{I}}|\right]$ is the expected size of $\V$ when projected onto a subset 
$I =\{i_1, \ldots, i_{m-1}\}$ of $m-1$ indices chosen uniformly at random without replacements from $[n]$,
and 
\begin{equation}
  \label{eq:bound_m}
  m := \left\lceil{\frac{(2d_0 + 2)(n+1)}{\delta + 2d_0 + 2} }\right\rceil = O\left(\frac{d_0 n}{\delta}\right) = 
  O\left(\frac{n d\log{d}}{\delta}\right) .
\end{equation}
We justify this choice in Appendix~\ref{app:Haussler_proof}, as well as the facts that $m \le n$ and $I$ consists of precisely $m-1$ indices.

For the sake of completeness, we review Haussler approach in Appendix~\ref{app:Haussler_proof}, and also emphasize some of the properties there,
which are fundamental in our view. Moreover, we refine Haussler's analysis to include two natural extensions:
\emph{(i) Obtain a refined bound on  $\EE_{I}\left[ |\V_{|_I}| \right]$:}
This extension is a direct consequence of Inequality~(\ref{eq:packing_bound}).
In the analysis of Haussler $\EE_{I}\left[ |\V_{|_I}| \right]$ is replaced by its upper bound $O(m^d)$, resulting from the fact that 
the primal shatter dimension of $\V$ (and thus of $\V_{|_I}$) is $d$, from which we obtain that \emph{for any} choice
of $I$, $|\V_{|_{I}}| = O((m-1)^d) = O(m^d)$, with a constant of proportionality that depends on $d$, and thus the packing
number is $O((n/\delta)^d)$, as asserted in Theorem~\ref{thm:packing}.\footnote{We note, however, 
  that the original analysis of Haussler~\cite{Haussler-95} does not rely on the primal shatter dimension, 
  and the bound on $\EE_{I}\left[ |\V_{|_I}| \right]$ is just $O(m^{d_0})$ due to 
  the Sauer-Shelah Lemma. 
}
However, in our analysis we would like to have a more subtle bound on the actual expected value of $|\V_{|_{I}}|$.
In fact, the scenario imposed by our assumptions on the set system eventually yields a much smaller bound on the 
expectation of $|\V_{|_{I}}|$, and thus on $|\V|$. We review this in more detail below.
\emph{(ii) Relaxing the bound on $m$.}
We show that Inequality~(\ref{eq:packing_bound}) is still applicable when the sample $I$ is slightly larger than the 
bound in~(\ref{eq:bound_m}), as a stand alone relation, this may result in a suboptimal bound on $|\V|$, however, 
this property will assist us to obtain local improvements over the bound on $|\V|$, eventually yielding the bound in 
Theorem~\ref{thm:shallow_packing}. 
Specifically, in our analysis, described in Section~\ref{sec:analysis}, we proceed in iterations, where at the first iteration we obtain 
a preliminary bound on $|\V|$ (Corollary~\ref{cor:V_1}), and then, at each subsequent iteration $j > 1$, we draw a sample $I_j$ of $m_j-1$ indices 
where
\begin{equation}
  \label{eq:m_j}
  m_j := m \log^{(j)}{(n/\delta)} = O\left( \frac{d_0 n \log^{(j)}{(n/\delta)}}{\delta} \right), 
\end{equation}
$m$ is our choice in~(\ref{eq:bound_m}), and $\log^{(j)}(\cdot)$ is the $j$th iterated logarithm function.
Then, by a straightforward generalization of Haussler's analysis (described in Appendix~\ref{app:Haussler_proof}),
we obtain, for each $j = 2, \ldots, \log^{*}{(n/\delta)}$:
\begin{equation}
  \label{eq:iter_bound_v}
  |\V| \le (d_0 + 1)\EE_{I_j}\left[ |\V_{|_{I_j}}| \right] .
\end{equation}

We note that since the bounds~(\ref{eq:packing_bound})--(\ref{eq:iter_bound_v}) involve a dependency on the VC-dimension $d_0$, 
we will sometimes need to explicitly refer to this parameter
in addition to the primal shatter dimension $d$. Nevertheless, throughout the analysis we 
exploit the relation $d \le d_0 = O(d\log{d})$, mentioned in Section~\ref{sec:intro}.


\section{The Analysis: Refining Haussler's Approach}
\label{sec:analysis}


\subparagraph*{Overview of the approach.}
We next present the proof of Theorem~\ref{thm:shallow_packing}.
In what follows, we assume that $\V$ is $\delta$-separated.
We first recall the assumption that the primal shatter function of $\V$ has a $(d,d_1)$ Clarkson-Shor property,
and that the length of each vector $\vv \in \V$ under the $L^1$ norm is most $k$.
This implies that $\V$ consists of at most $O(n^{d_1} k^{d-d_1})$ vectors.

Since the Clarkson-Shor property is hereditary, then this also applies to any projection of $\V$ onto a subset of indices, 
implying that the bound on $|\V_{|_{I}}|$ is at most $O(m^{d_1} k^{d-d_1})$, where $I$ is a subset of $m-1$ indices as above. 
However, due to our sampling scheme we expect that the length of each vector in $\V_{|_{I}}$ should be much smaller than $k$, 
(e.g., in expectation this value should not exceed $k (m-1)/n$), from which we may conclude that the actual bound on 
$|\V_{|_{I}}|$ is smaller than the trivial bound $O(m^{d_1} k^{d-d_1})$. 
Ideally, we would like to show that this bound is $O(m^{d_1} (k m/n)^{d-d_1}) = O(n^{d_1}k^{d-d_1}/\delta^d)$, 
which matches our asymptotic bound in Theorem~\ref{thm:shallow_packing} (recall that $m = O(n/\delta)$).
However, this is likely to happen only in case where the length of each vector in $\V_{|_{I}}$ does not exceed its expected value, 
or that there are only a few vectors whose length deviates from its expected value by far, whereas, in the worst case there might be many 
leftover ``long'' vectors in $\V_{|_{I}}$.
Nevertheless, our goal is to show that, with some carefulness one can proceed in iterations, where initially $I$ is a slightly larger sample,
and then at each iteration we reduce its size, until eventually it becomes $O(m)$ and we remain with only a few long vectors. 
At each such iteration $\V_{|_{I}}$ is a random structure that depends on the choice of $I$ 
and may thus contain long vectors, however, in expectation they will be scarce! 

Specifically, we proceed over at most $\log^{*}{(n/\delta)}$ iterations, where we perform local improvements over the bound on $|\V|$, 
as follows.
Let $|\V|^{(j)}$ be the bound on $|\V|$ after the $j$th iteration is completed, $1 \le j \le \log^{*}{(n/\delta)}$.
We first show in 
Corollary~\ref{cor:V_1} that for the first iteration, 
$|\V| \le |\V|^{(1)} = O\left( \frac{n^{d_1} k^{d-d_1} \log^{d}{(n/\delta)}}{\delta^d} \right)$, with a constant of proportionality that 
depends on $d$.
Then, at each further iteration $j \ge 2$, we select a set $I_j$ of 
$m_j - 1 = O(n \log^{(j)}{(n/\delta)}/\delta)$ indices uniformly at random without replacements from $[n]$ 
(see~(\ref{eq:m_j}) for the bound on $m_j$).
Our goal is to bound $\EE_{I_j}\left[ |\V_{|_{I_j}}| \right]$ using the bound $|\V|^{(j-1)}$, obtained at the previous iteration, 
which, we assume by induction to be $O\left( \frac{n^{d_1} k^{d-d_1} (\log^{(j-1)}{(n/\delta)})^{d}}{\delta^d} \right)$ 
(where the base case $j=2$ is shown in Corollary~\ref{cor:V_1}).

A key property in the analysis 
is then to show that the probability that the length of a vector $\vv \in \V_{|_{I_j}}$ (after the projection of $\V$ onto $I_j$) 
deviates from its expectation decays exponentially (Lemma~\ref{lem:exp_decay}).
Note that in our case this expectation is at most $k(m_j - 1)/n$. 
This, in particular, enables us to claim that \emph{in expectation} the overall majority of the vectors in 
$\V_{|_{I_j}}$ have length at most $O(k(m_j-1)/n)$, whereas the remaining longer vectors are scarce.
Specifically, since the Clarkson-Shor property is hereditary, we apply it to $\V_{|_{I_j}}$ and conclude that the number of its vectors 
of length at most $O(k(m_j-1)/n)$ is only $O\left(\frac{n^{d_1} k^{d-d_1} (\log^{(j)}{(n/\delta)})^{d}}{\delta^d} \right)$, 
with a constant of proportionality that depends on $d$. On the other hand, due to Lemma~\ref{lem:exp_decay} and our inductive hypothesis,
the number of longer vectors does not exceed $O\left(\frac{ n^{d_1} k^{d-d_1}}{\delta^d} \right)$, which is dominated by
the first bound. 
We thus conclude $\EE_{I_j}\left[ |\V_{|_{I_j}}| \right] = O\left(\frac{n^{d_1} k^{d-d_1} (\log^{(j)}{(n/\delta)})^{d}}{\delta^d} \right)$.
Then we apply Inequality~(\ref{eq:iter_bound_v}) in order to complete the inductive step, whence we obtain the bound on 
$|\V|^{(j)}$, and thus on $|\V|$. These properties are described more rigorously in Lemma~\ref{lem:bound_projected}, where derive a recursive 
inequality for $|\V|^{(j)}$ using the bound on $\EE_{I_j}\left[ |\V_{|_{I_j}}| \right]$.
We emphasize the fact that the sample $I_j$ is always chosen from the \emph{original} ground set $[n]$, and thus, 
at each iteration we construct a new sample \emph{from scratch}, and then exploit our observation in~(\ref{eq:iter_bound_v}).

\subsection{The First Iteration} 

In order to show our bound on $|\V^{(1)}|$, we form
a subset $I_1 = (i_1, \ldots, i_{m_1})$ of $m_1 = |I_1| = O\left(\frac{d n\log{(n/\delta)}}{\delta} \right)$ indices\footnote{In 
  this particular step we use a different machinery than that of Haussler~\cite{Haussler-95}; see the proof of Lemma~\ref{lem:compact_rep} 
  and our remark after Corollary~\ref{cor:V_1}. Therefore, $|I_1| = m_1$, rather than $m_1-1$. Furthermore,
  the constant of proportionality in the bound on $m_1$ depends just on the primal shatter dimension $d$ instead of the VC-dimension $d_0$
  as in~(\ref{eq:m_j}).
}
with the following two properties: 
(i) each vector in $\V$ is mapped to a distinct vector in $\V_{|_{I_1}}$, and 
(ii) the length of each vector in $\V_{|_{I_1}}$ does not exceed $O(k \cdot m_1/n)$. 

\begin{lemma}
  \label{lem:compact_rep}
  A sample $I_1$ as above satisfies properties (i)--(ii), with probability at least $1/2$.
\end{lemma}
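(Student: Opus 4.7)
The plan is to exploit the existence of small-size \emph{relative} $(\eps, p)$-approximations for set systems of bounded primal shatter dimension, as reviewed in Appendix~\ref{app:compact_rep} (following~\cite{LLS-01, HS-11}). Concretely, fix a constant $\eps_0 \in (0,1)$ (say $\eps_0 = 1/2$) and set $p := \delta/n$. The standard sampling bound says that a uniform random sample $I_1 \subseteq [n]$ of size $m_1 = \Theta\!\left(\frac{d\, n \log(n/\delta)}{\delta}\right)$ is a relative $(\eps_0, p)$-approximation for a set system of primal shatter dimension $O(d)$ with probability at least $3/4$. I would apply this simultaneously to $\V$ and to the symmetric-difference system $\V \oplus \V := \{\uu \oplus \vv : \uu, \vv \in \V,\ \uu \ne \vv\}$, whose primal shatter dimension is at most $2d$, since each of its elements is determined by a pair from $\V$. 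By a union bound, a single sample of the stated size is a relative approximation for both systems with probability at least $1/2$. This is the point at which the machinery differs from Haussler's, and is what allows $d$ rather than $d_0$ to appear in the size, as flagged in the footnote.

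Having fixed such an $I_1$, property~(i) follows from the approximation for $\V \oplus \V$. Because $\V$ is $\delta$-separated, every nonzero $\uu \oplus \vv \in \V \oplus \V$ has $\|\uu \oplus \vv\|/n > p$, and so the ``large-weight'' side of the relative approximation inequality yields $\|(\uu \oplus \vv)_{|_{I_1}}\|/m_1 \ge (1 - \eps_0) \|\uu \oplus \vv\|/n > 0$; in particular $\uu_{|_{I_1}} \ne \vv_{|_{I_1}}$, so projection is injective on $\V$.

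For property~(ii), I would invoke the approximation for $\V$ on any $\vv \in \V$ with $\|\vv\| \le k$. If $\|\vv\|/n \ge p$, the approximation inequality gives $\|\vv_{|_{I_1}}\| \le (1 + \eps_0)\|\vv\|\, m_1/n \le (1 + \eps_0)\, k\, m_1/n$. If $\|\vv\|/n < p$, the ``small-weight'' side gives $\|\vv_{|_{I_1}}\|/m_1 \le (1 + \eps_0)\, p$, hence $\|\vv_{|_{I_1}}\| \le (1 + \eps_0)\, \delta\, m_1/n \le 2(1 + \eps_0)\, k\, m_1/n$, using the standing assumption $k \ge \delta/2$. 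Either way $\|\vv_{|_{I_1}}\| = O(k\, m_1/n)$, as required.

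The subtlety I expect to spend the most time on is the size bookkeeping: making sure the stated $m_1 = \Theta(d n \log(n/\delta)/\delta)$ really suffices for a single sample to be a relative approximation for both $\V$ and $\V \oplus \V$ with success probability at least $1/2$, and that the dimensional dependence remains $d$ (from the primal shatter function) rather than $d_0$ (from the VC-dimension). Both issues are resolved by treating $\eps_0$ and the target failure probability as absolute constants---so they contribute only an absolute constant factor to $m_1$---and by observing that the primal shatter dimension of $\V \oplus \V$ is bounded by $2d$.
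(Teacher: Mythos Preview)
Your proof is correct and follows essentially the same approach as the paper: you use a random sample that is simultaneously a relative $(\delta/n,\text{const})$-approximation for $\V$ (to get property~(ii)) and for the symmetric-difference system $\V\oplus\V$ (to get property~(i)), then combine by a union bound. The only cosmetic difference is that the paper phrases the argument for~(i) in terms of an $\eps$-net for $\D=\V\oplus\V$ rather than a full relative approximation, but since (as the paper itself notes) an $\eps$-net is the special case $\eta=\text{const}$ of a relative approximation, the sample-size bookkeeping and the resulting bound are identical.
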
 

A set $I_1$ as above exists by the considerations in~\cite{Ezra-14}.
Nevertheless, in Appendix~\ref{app:compact_rep} we present the proof of Lemma~\ref{lem:compact_rep} for the sake 
of completeness and clarity.

We next apply Lemma~\ref{lem:compact_rep} in order to bound $|\V_{|_{I_1}}|$.
We first recall that the $(d,d_1)$ Clarkson-Shor property of the primal shatter function of $\V$ is hereditary.
Incorporating the bound on $m_1$ and property (ii), we conclude that 
$$
|\V_{|_{I_1}}| = O\left(m_1^{d_1} \left(\frac{k m_1}{n}\right)^{d-d_1}\right) =  O\left( \frac{n^{d_1} k^{d-d_1} \log^{d}{(n/\delta)}}{\delta^d} \right) ,
$$
with a constant of proportionality that depends on $d$.
Now, due to property (i), $|\V| \le |\V_{|_{I_1}}|$, we thus conclude:

\begin{corollary}
  \label{cor:V_1}
  After the first iteration we have:
  $|\V| \le |\V|^{(1)} = O\left( \frac{n^{d_1} k^{d-d_1} \log^{d}{(n/\delta)}}{\delta^d} \right)$, 
  with a constant of proportionality that depends on $d$.
\end{corollary}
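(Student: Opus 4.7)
The plan is straightforward and closely follows the discussion that immediately precedes the corollary: combine Lemma~\ref{lem:compact_rep} with the fact that the $(d,d_1)$ Clarkson-Shor property is hereditary under restriction. First I would invoke Lemma~\ref{lem:compact_rep} to fix an index set $I_1 \subseteq [n]$ of size $m_1 = O\bigl(dn\log(n/\delta)/\delta\bigr)$ enjoying the two stated properties: (i) the projection $\vv \mapsto \vv_{|_{I_1}}$ is injective on $\V$, so $|\V| \le |\V_{|_{I_1}}|$; and (ii) every vector in $\V_{|_{I_1}}$ has $L^1$-length at most $k' := c\,k m_1/n$ for an absolute constant $c$. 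Property (i) reduces the task to bounding $|\V_{|_{I_1}}|$, while (ii) certifies that the restricted system is $k'$-shallow over the ground set $I_1$ of size $m_1$.

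Next I would apply the hereditary $(d,d_1)$ Clarkson-Shor property to $\V_{|_{I_1}}$ on the ground set $I_1$. The number of vectors of length at most $k'$ is then
$$|\V_{|_{I_1}}| \;=\; O\!\left(m_1^{d_1}\,(k')^{\,d-d_1}\right) \;=\; O\!\left(m_1^{d}\cdot (k/n)^{d-d_1}\right).$$
Plugging in $m_1 = O\bigl(dn\log(n/\delta)/\delta\bigr)$ and absorbing all $d$-dependent factors into the implicit constant yields
$$|\V_{|_{I_1}}| \;=\; O\!\left(\frac{n^{d_1}\,k^{d-d_1}\,\log^{d}(n/\delta)}{\delta^{d}}\right),$$
and combining with property (i) gives the claimed bound on $|\V|^{(1)}$.

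There is essentially no substantive obstacle here, since all the difficulty is packaged into Lemma~\ref{lem:compact_rep}. The one routine sanity check is that the Clarkson-Shor estimate at shallowness $k'$ is nontrivial, i.e.\ $k' \ge 1$; this is immediate from the standing assumption $k \ge \delta/2$ in Theorem~\ref{thm:shallow_packing} together with $m_1 = \Omega(n/\delta)$, which forces $km_1/n = \Omega(1)$. Thus the first iteration amounts to a single application of the hereditary Clarkson-Shor property to a carefully sampled projection; the subsequent iterations in the proof of Theorem~\ref{thm:shallow_packing} will shave off the $\log^d(n/\delta)$ factor by iteratively shrinking the sample and using concentration of vector lengths under the projection.
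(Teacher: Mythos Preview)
Your proposal is correct and follows essentially the same argument as the paper: invoke Lemma~\ref{lem:compact_rep} to obtain the sample $I_1$ with properties (i) and (ii), apply the hereditary $(d,d_1)$ Clarkson-Shor property to bound $|\V_{|_{I_1}}| = O(m_1^{d_1}(km_1/n)^{d-d_1})$, and then use property (i) to conclude $|\V| \le |\V_{|_{I_1}}|$. Your added sanity check that $k' = \Omega(1)$ is a nice touch the paper leaves implicit.
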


\noindent{\bf Remark:}
We note that the preliminary bound given in Corollary~\ref{cor:V_1} is crucial for the analysis, as it constitutes the base
for the iterative process described in Section~\ref{sec:iter_process}.
In fact, this step of the analysis alone bypasses our refinement to Haussler's approach, and instead exploits the approach
of Dudley~\cite{Dudley-78}.

\subsection{The Subsequent Iterations: Applying the Inductive Step}
\label{sec:iter_process}



  
Let us now fix an iteration $j \ge 2$. 
As noted above, we assume by induction on $j$ that the bound $|\V|^{(j-1)}$ on $|\V|$ after the $(j-1)$th iteration is
$O\left( \frac{n^{d_1} k^{d-d_1} (\log^{(j-1)}{(n/\delta)})^{d}}{\delta^d} \right)$.
Let $I_j$ be a subset of $m_j-1$ indices, chosen uniformly at random without replacements from $[n]$,
with $m_j$ given by~(\ref{eq:m_j}). Let $\vv \in \V$, and denote by $\vv_{|_{I_j}}$ its projection onto $I_j$.
The expected length $\EE[\| \vv_{|_{I_j}}\|]$ of $\vv_{|_{I_j}}$
is at most $k (m_j-1)/n = O(d_0 k \log^{(j)}{(n/\delta)}/\delta)$.
We next show (see Appendix~\ref{app:exp_decay} for the proof):

\begin{lemma}[Exponential Decay Lemma]
  \label{lem:exp_decay}
  $$
  \Prob\left[\| \vv_{|_{I_j}}\| \ge t \cdot \frac{k (m_j - 1)}{n} \right] <  2^{-t k(m_j - 1)/n},
  $$
  where $t \ge 2e$ is a real parameter and $e$ is the base of the natural logarithm.
\end{lemma}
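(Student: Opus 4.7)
\medskip
\noindent\textbf{Proof proposal for Lemma~\ref{lem:exp_decay}.}
The random variable $\|\vv_{|_{I_j}}\|$ counts the number of coordinates $i \in I_j$ at which $\vv_i = 1$. Since $I_j$ is drawn uniformly at random without replacement from $[n]$ and $\vv$ has at most $k$ coordinates equal to $1$ out of $n$, this is exactly a \emph{hypergeometric} random variable with population size $n$, number of ``marked'' elements $K := \|\vv\| \le k$, and sample size $m_j-1$. Its expectation is $\mu := K(m_j-1)/n \le k(m_j-1)/n$. So the claim is a standard upper-tail bound for the hypergeometric distribution.

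My plan is to apply the classical Chernoff-type tail bound for hypergeometric random variables, which is available because, by a theorem of Hoeffding (and later Chv\'atal), sampling without replacement is at least as concentrated around the mean as sampling with replacement; equivalently, the moment generating function of a hypergeometric variable is dominated by that of a binomial variable with the same mean. Thus for every $s > \mu$ we have
\begin{equation*}
  \Prob\bigl[\,\|\vv_{|_{I_j}}\| \ge s\,\bigr] \;\le\; \left(\frac{e\mu}{s}\right)^{s}.
\end{equation*}

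Now I instantiate this bound at $s := t\cdot k(m_j-1)/n$. Because $K \le k$ and $t \ge 2e$, we have
\begin{equation*}
  \frac{s}{\mu} \;=\; \frac{t\,k}{K} \;\ge\; t \;\ge\; 2e,
\end{equation*}
so in particular $s > \mu$ and the inequality above applies. Substituting gives
\begin{equation*}
  \left(\frac{e\mu}{s}\right)^{s} \;=\; \left(\frac{eK}{tk}\right)^{s} \;\le\; \left(\frac{e}{t}\right)^{s} \;\le\; 2^{-s} \;=\; 2^{-t\,k(m_j-1)/n},
\end{equation*}
using $e/t \le 1/2$ in the penultimate step. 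This is exactly the bound asserted in the lemma.

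The only technical point worth being explicit about is the passage from binomial Chernoff to hypergeometric Chernoff, which is the Hoeffding--Chv\'atal comparison; I would cite it as a known fact rather than reprove it. Everything else is a direct substitution, so I do not anticipate any serious obstacle beyond keeping the two parameters $k$ and $K$ distinct and using monotonicity $K \le k$ correctly at the final step.
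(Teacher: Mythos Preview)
Your proof is correct and follows essentially the same strategy as the paper: identify $\|\vv_{|_{I_j}}\|$ as hypergeometric, transfer a multiplicative Chernoff bound to the without-replacement setting, and substitute. The only cosmetic difference is the transfer mechanism---you invoke the Hoeffding--Chv\'atal MGF comparison to the binomial, whereas the paper verifies negative correlation of the indicators and cites the Panconesi--Srinivasan Chernoff bound for negatively correlated variables; both yield the same inequality and the remaining arithmetic is identical.
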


We now proceed as follows. Recall that we assume $k \ge\delta/2$, and by~(\ref{eq:m_j}) we have
$m_j = O\left(\frac{d_0 n \log^{(j)}{(n/\delta)}}{\delta}\right)$. 
Thus it follows from Lemma~\ref{lem:exp_decay} that
\begin{equation}
  \label{eq:small_prob}
  \Prob\left[\| \vv_{|_{I_j}} \| \ge C \cdot \frac{k (m_j-1)}{n} \right]  <  \frac{1}{( \log^{(j-1)}{(n/\delta)} )^{D} } ,
\end{equation}
where $C \ge 2e$ is a sufficiently large constant, and $D > d_0$ is another constant whose choice depends on $C$ and $d_0$, 
and can be made arbitrarily large. Since $d_0 \ge d$ we obviously have $D > d$.
We next show:

\begin{lemma}
  \label{lem:bound_projected}
  Under the assumption that $k \ge \delta/2$, we have, at any iteration $j \ge 2$:
  \begin{equation}
    \label{eq:recursion}
    |\V|^{(j)} \le A (d_0 + 1)\cdot \frac{n^{d_1} k^{d-d_1} (\log^{(j)}{(n/\delta)})^{d}}{\delta^d} + 
    (d_0 + 1) \cdot \frac{|\V|^{(j-1)}}{(\log^{(j-1)}{(n/\delta)})^{D}} ,
  \end{equation}
  where $|\V|^{(l)}$ is the bound on $|\V|$ after the $l$th iteration,
  and $A > 0$ is a constant that depends on $d$ (and $d_0$) and the constant of proportionality determined by the 
  Clarkson-Shor property of $\V$.
\end{lemma}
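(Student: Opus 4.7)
The plan is to combine Inequality~(\ref{eq:iter_bound_v}) with a case split on the length of each projected vector. Since $\V$ is $\delta$-separated and (by the inductive hypothesis stated just before the lemma) $|\V|$ is already known to be at most $|\V|^{(j-1)}$, my task reduces to estimating $\EE_{I_j}[|\V_{|_{I_j}}|]$ and then multiplying by $(d_0+1)$, as prescribed by~(\ref{eq:iter_bound_v}). To that end, I would partition the projected vectors into two classes: the \emph{short} ones, of length at most $C k(m_j-1)/n$, and the \emph{long} ones, of length exceeding this threshold, where $C\ge 2e$ is the constant appearing in~(\ref{eq:small_prob}).

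For the short vectors, the key observation is that the $(d,d_1)$ Clarkson-Shor property is hereditary, and so it applies to $\V_{|_{I_j}}$ viewed as a set system over the $m_j-1$ indices in $I_j$. Applying it with the length parameter $Ck(m_j-1)/n$, the number of short vectors in $\V_{|_{I_j}}$ is at most
$$
O\!\left( (m_j-1)^{d_1} \left(\frac{Ck(m_j-1)}{n}\right)^{d-d_1}\right)
= O\!\left( \frac{n^{d_1} k^{d-d_1} (\log^{(j)}(n/\delta))^{d}}{\delta^{d}}\right),
$$
after substituting the bound~(\ref{eq:m_j}) on $m_j$ and absorbing the $C^{d-d_1}$ and $d_0^d$ factors into the constant $A$. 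Crucially, this bound is deterministic (it holds for every choice of $I_j$) and so survives the expectation.

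For the long vectors, I would bound their expected number by appealing to linearity of expectation and to Lemma~\ref{lem:exp_decay}. Any long vector $\vv_{|_{I_j}}\in \V_{|_{I_j}}$ is the projection of at least one $\vv\in\V$ whose projection is long, so the number of distinct long vectors is at most the number of $\vv\in\V$ with $\|\vv_{|_{I_j}}\|\ge Ck(m_j-1)/n$. By~(\ref{eq:small_prob}) each such $\vv$ contributes to the expectation with probability less than $1/(\log^{(j-1)}(n/\delta))^D$. Summing over $\vv\in\V$ and invoking the inductive bound $|\V|\le|\V|^{(j-1)}$ yields an expected contribution of at most $|\V|^{(j-1)}/(\log^{(j-1)}(n/\delta))^D$.

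Adding the two contributions gives the desired bound on $\EE_{I_j}[|\V_{|_{I_j}}|]$, and multiplying through by $(d_0+1)$ via~(\ref{eq:iter_bound_v}) produces exactly the recursion~(\ref{eq:recursion}). The only delicate point that I expect to require care is the legitimacy of using the inductive hypothesis $|\V|\le|\V|^{(j-1)}$ to bound the expected count of long projections: one has to be explicit that the sample $I_j$ is drawn fresh from $[n]$ rather than from $I_{j-1}$, so that the inductive estimate on $|\V|$ (which is a fixed quantity independent of $I_j$) and the tail bound of Lemma~\ref{lem:exp_decay} (which is a statement about a single $\vv$ under the new sample) may be combined via linearity of expectation. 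Modulo this conceptual point, the rest is bookkeeping with $m_j$ and the Clarkson-Shor bound.
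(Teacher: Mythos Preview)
Your proposal is correct and matches the paper's own proof essentially step for step: the paper also bounds $\EE_{I_j}[|\V_{|_{I_j}}|]$ by splitting into short vectors (handled deterministically via the hereditary Clarkson--Shor property) and long vectors (handled in expectation via Inequality~(\ref{eq:small_prob}) and the inductive bound $|\V|\le|\V|^{(j-1)}$), and then applies~(\ref{eq:iter_bound_v}). Your remark about $I_j$ being drawn fresh from $[n]$ is exactly the point the paper stresses in its overview of the iterative approach.
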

\begin{proof}
  We in fact show:
  $$
  \EE_{I_j}\left[ |\V_{|_{I_j}}| \right] \le  A \cdot \frac{n^{d_1} k^{d-d_1} (\log^{(j)}{(n/\delta)})^{d}}{\delta^d} + 
  \frac{|\V|^{(j-1)}}{(\log^{(j-1)}{(n/\delta)})^{D}} ,
  $$
  and then exploit the relation $|\V| \le (d_0 + 1)\EE_{I_j}\left[ |\V_{|_{I_j}}| \right]$ (Inequality~(\ref{eq:iter_bound_v})), 
  in order to prove~(\ref{eq:recursion}). 

  In order to obtain the first term in the bound on $\EE_{I_j}\left[ |\V_{|_{I_j}}| \right]$,
  we consider all vectors of length at most $C \cdot \frac{k (m_j - 1)}{n}$ 
  (where $C \ge 2e$ is a sufficiently large constant as above) 
  in the projection of $\V$ onto a subset $I_j$ of $m_j - 1$ indices (in this part of the analysis $I_j$ can be arbitrary). 
  Since the primal shatter function of $\V$ has a $(d,d_1)$ Clarkson-Shor property, which is hereditary, we obtain at most 
  $$
  O({m_j}^{d_1}  (k(m_j - 1)/n)^{d-d_1}) 
  = O\left(\frac{n^{d_1} k^{d-d_1} (\log^{(j)}{(n/\delta)})^{d}}{\delta^d} \right)
  $$ 
  vectors in $\V_{|_{I_j}}$ of length smaller than $C \cdot \frac{k (m_j - 1)}{n} = O(\frac{k \log^{(j)}{(n/\delta)}}{\delta})$.
  It is easy to verify that the constant of proportionality $A$ in the bound just obtained depends on $d$, $d_0$, and 
  the constant of proportionality determined by the Clarkson-Shor property of $\V$.
  
  Next, in order to obtain the second term, we consider the vectors $\vv \in \V$ that are mapped to vectors $\vv_{|_{I_j}} \in \V_{|_{I_j}}$ with 
  $\| \vv_{|_{I_j}} \| > C \cdot \frac{k (m_j - 1)}{n}$.
  By Inequality~(\ref{eq:small_prob}):
  $$
  \EE\left[\left| \left \{ \vv \in \V  \ \mid \  \| \vv_{|_{I_j}} \| > C \cdot \frac{k (m_j-1)}{n} \right \} \right| \right] <
  \frac{| \V|}{(\log^{(j-1)}{(n/\delta)})^{D}} ,
  $$
  and recall that $|\V|^{(j-1)}$ is the bound on $|\V|$ after the previous iteration $j-1$.
  This completes the proof of the lemma.
  
%
\end{proof}

\noindent{\bf Remark:}
We note that the bound on $\EE_{I_j}\left[ |\V_{|_{I_j}}| \right]$ consists of the \emph{worst-case} bound on the number of short
vectors of length at most $C \cdot k (m_j - 1)/n$, obtained by the Clarkson-Shor property, plus the \emph{expected} number
of long vectors. 

\subparagraph*{Wrapping up.}


We now complete the analysis and solve Inequality~(\ref{eq:recursion}).
Our initial assumption that $\delta \le n/2^{(d_0+1)}$, and the fact that $D > d$ is sufficiently large, 
imply that the coefficient of the recursive term is smaller than $1$, for any $2 \le j \le 1 + \log^{*}{(n/\delta)} - \log^*{(d_0+1)}$.\footnote{We 
  observe that $2 \le 1 + \log^{*}{(n/\delta)} - \log^*{(d_0+1)} \le \log^{*}{(n/\delta)}$, due to our assumption that
  $\delta \le n/2^{(d_0+1)}$, and the fact that $d_0 \ge 1$.} 
Then, using induction on $j$, one can verify that the solution is
\begin{equation}
  \label{eq:bound_V}
  |\V|^{(j)} \le 2A (d_0+1) \frac{n^{d_1} k^{d-d_1} (\log^{(j)}{(n/\delta)})^{d}}{\delta^d} ,
\end{equation}
for any $2 \le j \le 1 + \log^{*}{(n/\delta)} - \log^*{(d_0+1)}$.

We thus conclude $|\V|^{(j)} = O\left(\frac{n^{d_1} k^{d-d_1} (\log^{(j)}{(n/\delta)})^{d}}{\delta^d}\right)$.
In particular, at the termination of the last iteration $j^{*} = 1 + \log^{*}{(n/\delta)} -  \log^*{(d_0+1)}$, we obtain:
$$
|\V| \le |\V|^{(j^{*})} = O\left( \frac{n^{d_1} k^{d-d_1}}{\delta^d} \right) ,
$$
with a constant of proportionality that depends on $d$ (and $d_0$). 
This at last completes the proof of Theorem~\ref{thm:shallow_packing}.

\section{Applications}
\label{sec:applications}

\subsection{Realization to Geometric Set Systems}
\label{sec:geom_settings}

We now incorporate the random sampling technique of Clarkson and Shor~\cite{CS-89} with Theorem~\ref{thm:shallow_packing}
in order to conclude that small shallow packings exist in several useful scenarios. 
This includes the case where $\V$ represents:
(i) A collection of halfspaces defined over an $n$-point set in $d$-space.
In this case, for any integer parameter $0 \le k \le n$, the number of halfspaces that contain at most $k$
points is $O(n^{\floor{d/2}} k^{\ceil{d/2}})$, and thus the primal shatter function has a $(d, \floor{d/2})$ Clarkson-Shor property.
(ii) A collection of balls defined over an $n$-point set in $d$-space.
Here, the number of balls that contain at most $k$ points is $O(n^{\floor{(d+1)/2}} k^{\ceil{(d+1)/2}})$, and therefore
the primal shatter function has a $(d+1, \floor{(d+1)/2})$ Clarkson-Shor property.
(iii) A collection of \emph{parallel slabs} (that is, each of these regions is enclosed between two parallel hyperplanes and has an arbitrary width), 
defined over an $n$-point set in $d$-space.
The number of slabs, which contains at most $k$ points is $O(n^d k)$.
(iv) A dual set system of points in $d$-space and a collection $F$ of $n$ $(d-1)$-variate (not necessarily continuous or totally defined) 
functions of \emph{constant description complexity}. Specifically, the graph of each function is a semi-algebraic set in ${\reals}^d$ defined by 
a constant number of polynomial equalities and inequalities of constant maximum degree (see~\cite[Chapter 7]{SA-95} for a detailed description 
of these properties, which we omit here).\footnote{In~\cite{SA-95} it is also required that the projection of each function onto the plane $x_d = 0$ 
  has a constant description complexity.}
In this case, $\V$ is represented by the cells (of all dimensions) in the \emph{arrangement} of the graphs of the functions in $F$ 
(see Section~\ref{sec:intro} for the definition) that lie below at most $k$ function graphs.
This portion of the arrangement is also referred to as the \emph{at most $k$-level}, and its combinatorial complexity 
is $O(n^{d-1 + \eps} k^{1-\eps})$, for any $\eps > 0$, where the constant of proportionality depends on $d$ and $\eps$.
Thus the primal shatter function has a $(d, d-1+\eps)$ Clarkson-Shor property.


All bounds presented in (i)--(iv) are well known in the field of computational geometry; we refer the reader 
to~\cite{CS-89, Mat-02, SA-95} for further details. We thus conclude:

\begin{corollary}
  \label{cor:halfspaces}
  Let $\V \subseteq \{0,1\}^n$ be a set of indicator vectors representing a set system of halfspaces defined over an $n$-point
  set in $d$-space, and let $\delta, k$ be two integer parameters as in Theorem~\ref{thm:shallow_packing}.
  Then:
  $$
  \M(\delta, k, \V) = O\left( \frac{ n^{\floor{d/2}} k^{\ceil{d/2}}}{\delta^d} \right) ,
  $$
  where the constant of proportionality depends on $d$.
\end{corollary}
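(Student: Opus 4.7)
The plan is to derive Corollary~\ref{cor:halfspaces} as an essentially immediate specialization of the Shallow Packing Lemma (Theorem~\ref{thm:shallow_packing}) to the geometric setting of halfspaces. The only non-trivial ingredient is verifying that the primal shatter function of the halfspace set system has a $(d, \floor{d/2})$ Clarkson-Shor property, after which the corollary follows by plugging $d_1 = \floor{d/2}$ and $d - d_1 = \ceil{d/2}$ into the bound of Theorem~\ref{thm:shallow_packing}.

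First, I would recall the standard setup: given $n$ points $P$ in $\reals^d$, the set system $\V$ is the collection of subsets $P \cap h$ over all halfspaces $h$. Encoded as $\{0,1\}^n$ indicator vectors, this set system is well known to have primal shatter dimension $d$ (its VC-dimension is $d+1$, by duality with hyperplane arrangements; the primal shatter function satisfies $\pi_{\V}(m) = O(m^d)$). The key combinatorial fact I would invoke is that the number of distinct subsets of $P$ of size at most $k$ that can be cut off by a halfspace is $O(n^{\floor{d/2}} k^{\ceil{d/2}})$. This is the classical Clarkson-Shor bound, obtained via random sampling from the Upper Bound Theorem applied to the dual hyperplane arrangement (the number of vertices at level at most $k$ in an arrangement of $n$ hyperplanes in $\reals^d$ is $O(n^{\floor{d/2}} k^{\ceil{d/2}})$); I would cite~\cite{CS-89} (and~\cite{Mat-02, SA-95}) rather than re-derive it.

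Second, I would observe that this property is hereditary: restricting $P$ to any subset $P' \subseteq P$ of size $m$ yields again a halfspace set system on $m$ points, to which the same bound $O(m^{\floor{d/2}} k^{\ceil{d/2}})$ applies for any $1 \le k \le m$. Thus $\V$ satisfies the hypotheses of Theorem~\ref{thm:shallow_packing} with parameters $d$ (primal shatter dimension) and $d_1 = \floor{d/2}$. Substituting $d - d_1 = \ceil{d/2}$ into the conclusion
\[
\M(\delta, k, \V) = O\!\left( \frac{n^{d_1} k^{d-d_1}}{\delta^d} \right)
\]
immediately yields the claimed bound $O\!\left( n^{\floor{d/2}} k^{\ceil{d/2}} / \delta^d \right)$, with the constant of proportionality inheriting its dependence on $d$ from Theorem~\ref{thm:shallow_packing} and from the Clarkson-Shor constant for halfspaces.

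There is no substantial obstacle here; the real work was done in proving Theorem~\ref{thm:shallow_packing}. The only point that deserves a brief sentence in the write-up is the justification that one may take $d_1 = \floor{d/2}$ (and not a smaller value), namely that the Clarkson-Shor bound for halfspaces is known to be tight in the worst case, so this is the correct exponent to use. Everything else is bookkeeping.
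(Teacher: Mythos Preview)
Your proposal is correct and matches the paper's own argument essentially verbatim: the paper also derives the corollary by observing that halfspace set systems enjoy the $(d,\lfloor d/2\rfloor)$ Clarkson--Shor property (citing~\cite{CS-89, Mat-02, SA-95}) and then invoking Theorem~\ref{thm:shallow_packing}. Your additional remarks about heredity and tightness are fine embellishments but not strictly needed; the tightness observation in particular is orthogonal to establishing the upper bound.
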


\begin{corollary}
  \label{cor:balls}
  Let $\V \subseteq \{0,1\}^n$ be a set of indicator vectors representing a set system of balls defined over an $n$-point
  set in $d$-space, and let $\delta, k$ be two integer parameters as in Theorem~\ref{thm:shallow_packing}.
  Then:
  $$
  \M(\delta, k, \V) = O\left( \frac{ n^{\floor{(d+1)/2}} k^{\ceil{(d+1)/2}}}{\delta^{d+1}} \right) ,
  $$
  where the constant of proportionality depends on $d$.
\end{corollary}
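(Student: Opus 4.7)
The plan is to reduce set systems of balls in $\reals^d$ to set systems of halfspaces in $\reals^{d+1}$ via the standard paraboloid lifting, and then appeal to Theorem~\ref{thm:shallow_packing} with the shallow-halfspace Clarkson--Shor bound one dimension up.

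First I would recall the lifting map $\phi:\reals^d \to \reals^{d+1}$ defined by $\phi(x)=(x,\|x\|^2)$. Under $\phi$, every ball $B \subset \reals^d$ corresponds to a halfspace $h_B \subset \reals^{d+1}$ with the property that $p \in B$ if and only if $\phi(p) \in h_B$. Consequently, the set system $\V$ of balls over an $n$-point set $X \subseteq \reals^d$ is isomorphic (as a set system) to a set system $\V'$ of halfspaces over the lifted point set $\phi(X) \subseteq \reals^{d+1}$. The isomorphism preserves both the $\delta$-packing number and the length of each indicator vector, so in particular $\M(\delta,k,\V) = \M(\delta,k,\V')$.

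Next, I would invoke the classical at-most-$k$-level bound for halfspaces in $\reals^{d+1}$: for every subset $Y$ of $m \le n$ lifted points and every $1 \le k \le m$, the number of distinct subsets of $Y$ cut off by a halfspace containing at most $k$ points of $Y$ is $O\!\left(m^{\floor{(d+1)/2}} k^{\ceil{(d+1)/2}}\right)$. This bound is hereditary, i.e., it holds uniformly over every subsample $Y$, so the primal shatter function of $\V'$ has the $(d+1,\floor{(d+1)/2})$ Clarkson--Shor property required by Theorem~\ref{thm:shallow_packing}; moreover the VC-dimension of halfspaces in $\reals^{d+1}$ is a constant depending only on $d$.

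Finally, applying Theorem~\ref{thm:shallow_packing} to $\V'$ with parameters $d \leftarrow d+1$ and $d_1 \leftarrow \floor{(d+1)/2}$ yields
$$
\M(\delta,k,\V) \;=\; \M(\delta,k,\V') \;=\; O\!\left( \frac{n^{\floor{(d+1)/2}}\, k^{(d+1)-\floor{(d+1)/2}}}{\delta^{d+1}} \right) \;=\; O\!\left( \frac{n^{\floor{(d+1)/2}}\, k^{\ceil{(d+1)/2}}}{\delta^{d+1}} \right),
$$
which is exactly the claimed bound. There is no substantial obstacle here: the corollary is essentially a plug-in of the known shallow-halfspace shatter-function bound in one higher dimension, and the only point that merits a sentence of justification is that the lifting preserves vector lengths and that the halfspace bound is genuinely hereditary over sub-point-sets, both of which are standard facts.
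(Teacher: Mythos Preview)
Your proposal is correct and is essentially the paper's own argument. The paper simply asserts (citing~\cite{CS-89}) that the set system of balls in $\reals^d$ has the $(d+1,\lfloor(d+1)/2\rfloor)$ Clarkson--Shor property and then plugs this into Theorem~\ref{thm:shallow_packing}; your paraboloid lifting is precisely the standard way to establish that property, so you have merely made explicit what the paper takes as a known fact.
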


\begin{corollary}
  \label{cor:slabs}
  Let $\V \subseteq \{0,1\}^n$ be a set of indicator vectors representing a set system of parallel slabs defined over an $n$-point
  set in $d$-space, and let $\delta, k$ be two integer parameters as in Theorem~\ref{thm:shallow_packing}.
  Then:
  $$
  \M(\delta, k, \V) = O\left( \frac{n^d k}{\delta^{d+1}} \right) ,
  $$
  where the constant of proportionality depends on $d$.
\end{corollary}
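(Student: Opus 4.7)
The plan is to apply Theorem~\ref{thm:shallow_packing} with the Clarkson--Shor parameters appropriate to parallel slabs, which are recorded just above the statement, in item~(iii) of Section~\ref{sec:geom_settings}. There it is stated that for any $m$-point subset and any parameter $k$, the number of parallel slabs in $\reals^d$ enclosing at most $k$ of the points is $O(m^{d} k)$. Reading this through the notation of Theorem~\ref{thm:shallow_packing}, where the symbol that plays the role of the total exponent is the primal shatter dimension rather than the ambient dimension, we see that the primal shatter function of $\V$ has the $(d',d_1)$ Clarkson--Shor property with
$$d' \;=\; d+1, \qquad d_1 \;=\; d, \qquad d'-d_1 \;=\; 1.$$
The extra ``$+1$'' in the total shatter dimension reflects the fact that a slab is bounded by two parallel hyperplanes, and therefore carries one degree of freedom (its width) on top of the orientation.

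First I would verify that this bound is hereditary, which is immediate since the statement in item~(iii) applies uniformly to every $m$-point subset $X'\subseteq X$ and every integer parameter between $1$ and $m$; projecting $\V$ onto any index sequence $I$ simply restricts the slab system to a smaller point set and preserves the $O(m^{d}k)$ estimate. Thus the hypothesis of Theorem~\ref{thm:shallow_packing} holds verbatim. I would also discharge the two mild side conditions of that theorem: the upper bound $\delta \le n/2^{d_0+1}$ (if it fails, Theorem~\ref{thm:packing} already yields a constant bound depending on $d$ and $d_0$), and the lower bound $k \ge \delta/2$ (otherwise, as noted in the preamble to Theorem~\ref{thm:shallow_packing}, the packing is empty and the claim is trivial).

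Next I would substitute $d'=d+1$ and $d_1=d$ directly into the bound supplied by Theorem~\ref{thm:shallow_packing}:
$$\M(\delta,k,\V) \;=\; O\!\left(\frac{n^{d_1}\,k^{d'-d_1}}{\delta^{d'}}\right) \;=\; O\!\left(\frac{n^{d}\,k}{\delta^{d+1}}\right),$$
with the constant of proportionality inherited from Theorem~\ref{thm:shallow_packing} and therefore depending only on $d$. This is exactly the bound claimed in the corollary, and no further probabilistic or geometric work is required.

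The main ``obstacle'' is really bookkeeping rather than mathematics: one must keep careful track of the conflicting uses of the symbol $d$ (ambient dimension in the corollary versus primal shatter dimension in the theorem), and ensure that the resulting exponents $d_1=d$ in the numerator and $d'=d+1$ in the denominator do not slip by one. Once the substitution is made carefully, the proof collapses to a direct invocation of Theorem~\ref{thm:shallow_packing} via the known Clarkson--Shor estimate for parallel slabs.
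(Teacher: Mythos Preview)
Your proposal is correct and matches the paper's approach exactly: the paper simply records in item~(iii) of Section~\ref{sec:geom_settings} that parallel slabs over $m$ points have the $(d+1,d)$ Clarkson--Shor property (i.e., $O(m^d k)$ shallow slabs), and then states Corollary~\ref{cor:slabs} as an immediate consequence of Theorem~\ref{thm:shallow_packing} without further argument. Your careful bookkeeping distinguishing the ambient dimension $d$ from the shatter dimension $d'=d+1$ is precisely the one point that needs attention, and you handle it correctly.
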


\begin{corollary}
  \label{cor:dual}
  Let $\V \subseteq \{0,1\}^n$ be a set of indicator vectors representing a dual set system of 
  $n$ $(d-1)$-variate (not necessarily continuous or totally defined) functions of constant description complexity
  and points in $d$-space. Let $\delta, k$ be two integer parameters as in Theorem~\ref{thm:shallow_packing}.
  Then:
  $$
  \M(\delta, k, \V) = O\left( \frac{ n^{d-1 + \eps} k^{1-\eps}}{\delta^d} \right) ,
  $$
  where the constant of proportionality depends on $d$ and on $\eps$.
\end{corollary}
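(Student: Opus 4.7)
The proof is a direct application of Theorem~\ref{thm:shallow_packing}, so the plan is simply to verify that the dual set system in question has primal shatter function with the appropriate Clarkson--Shor property and then plug the parameters in. The only substantive ingredient is a classical bound on the complexity of the at most $k$-level, which I would quote rather than reprove.

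First, I would translate the geometry of $\V$ into the language of $\{0,1\}^n$-vectors. Here the ground set consists of $n$ $(d-1)$-variate functions of constant description complexity, and $\V$ corresponds to cells of all dimensions in the arrangement $\A(F)$ of their graphs in $\reals^d$. A cell is naturally encoded by the indicator vector whose $i$th coordinate is $1$ precisely when the $i$th function lies (say) above it; under this encoding, the length of the vector representing a cell is exactly the ``level'' of that cell, and the cells of length at most $k$ are exactly those lying in the \emph{at most $k$-level} of $\A(F)$. A classical result in computational geometry (see \cite{CS-89, SA-95}) states that the combinatorial complexity of the at most $k$-level in such an arrangement of $n$ surfaces is $O(n^{d-1+\eps} k^{1-\eps})$, for any $\eps > 0$, where the constant depends on $d$ and $\eps$.

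Next, I would observe that this bound is hereditary in the appropriate sense: restricting $\V$ to a subset of $m \le n$ indices corresponds to considering the arrangement of the corresponding $m$ functions, so the number of vectors in $\V_{|_I}$ of length at most $k$ is bounded by $O(m^{d-1+\eps} k^{1-\eps})$. Writing this as $O(m^{d_1} k^{d-d_1})$ with $d_1 = d-1+\eps$ (which lies in $[1,d]$ for $0<\eps\le 1$), this is precisely the statement that the primal shatter function of $\V$ has the $(d,d-1+\eps)$ Clarkson--Shor property.

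Finally, I would invoke Theorem~\ref{thm:shallow_packing} with these parameters: substituting $d_1 = d-1+\eps$ into the bound $O(n^{d_1} k^{d-d_1}/\delta^d)$ yields
\[
\M(\delta, k, \V) = O\!\left( \frac{n^{d-1+\eps}\, k^{1-\eps}}{\delta^d} \right),
\]
with constant depending on $d$ and $\eps$, as required. The main obstacle --- proving the refined packing bound itself --- has already been absorbed into Theorem~\ref{thm:shallow_packing}; the remaining work, namely verifying the hereditary at most $k$-level bound for arrangements of constant-description-complexity surfaces, is standard and follows the same template used in Corollaries~\ref{cor:halfspaces}--\ref{cor:slabs}. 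No further technical difficulty arises.
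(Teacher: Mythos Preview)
Your proposal is correct and follows essentially the same approach as the paper: verify that the at most $k$-level bound $O(n^{d-1+\eps}k^{1-\eps})$ for arrangements of constant-description-complexity surfaces gives the $(d,\,d-1+\eps)$ Clarkson--Shor property (hereditarily), and then apply Theorem~\ref{thm:shallow_packing} with $d_1 = d-1+\eps$. The paper handles this corollary in exactly this way, as case~(iv) of the template used for Corollaries~\ref{cor:halfspaces}--\ref{cor:slabs}.
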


\subsection{Spanning Trees of Low Total Conflict Number}
\label{sec:spanning_trees}

Suppose we are given a set $X$ of $n$ points in $d$-space and a set $\Sigma$ of $m$ regions defined over $X$.
With a slight abuse of notation, we also refer to $\Sigma$ as the corresponding set system defined over $X$. 
\footnote{Here, $\Sigma$ has the role $\V$ in our original notation.}
We now assume that the set system $(X, \Sigma)$ is shallow and has a $(d, d_1)$ Clarkson-Shor property. 
The question at hand is to construct a spanning tree over $\Sigma$, whose overall \emph{conflict number} is small.
This notion is related to \emph{spanning trees with low crossing number}, defined by Welzl~\cite{Welzl-92} (see also~\cite{CW-89}), 
where the tree spans $X$ (rather than $\Sigma$), in which case it has a geometric realization,
where the edges of the tree are the line segments connecting points of $X$.
Our structure is dual to that of Welzl~\cite{Welzl-92} and defined as follows.


Let $G = (\Sigma, E)$ be a graph with vertex set $\Sigma$. 
We say that a point $x \in X$ \emph{conflicts} with an edge $\{S, S'\} \in E$ if $x \in S \triangle S'$.
We then define the \emph{conflict number} of an edge $e$ of $G$ as the number of points $x \in X$ with which it is in conflict, 
and then the total conflict number of $G$ is the sum of the conflict numbers, over all its edges.
Using similar arguments as in~\cite[Lemma 5.18]{Mat-99} and~\cite{Welzl-92}, one can show (we omit the easy proof):

\begin{lemma}
  \label{lem:tree}
  Let $\Sigma$ be a set system of $m$ sets, defined over an $n$-point set $X$.
  Assume $\Sigma$ has a $(d, d_1)$ Clarkson-Shor property, and that $|S| \le k$, for each $S \in \Sigma$,
  where $1 \le k \le n$ is an integer parameter.
  Then there exists a spanning tree $\T$ over $\Sigma$, whose total conflict number is $O\left(n^{d_1/d} k^{1-d_1/d} m^{1-1/d}\right)$.
\end{lemma}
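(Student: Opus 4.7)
The plan is to adapt the classical greedy (Boruvka-style) construction of spanning trees of low crossing number due to Welzl~\cite{Welzl-92} (cf.\ Matou{\v s}ek~\cite[Lemma~5.18]{Mat-99}) to the dual setting, with the Shallow Packing Lemma (Theorem~\ref{thm:shallow_packing}) playing the role that Haussler's Packing Lemma plays in the classical argument. Specifically, I would build $\T$ iteratively: begin with each $S \in \Sigma$ as its own singleton component, and at each step locate the pair of distinct components $(\mathcal{C}_1, \mathcal{C}_2)$ whose minimum cross-pair symmetric difference $\min\{|S_1 \triangle S_2| : S_1 \in \mathcal{C}_1,\ S_2 \in \mathcal{C}_2\}$ is smallest, add the realizing edge $\{S_1, S_2\}$ to $\T$, and merge $\mathcal{C}_1$ with $\mathcal{C}_2$. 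After $m-1$ such steps, $\T$ is a spanning tree of $\Sigma$, and the goal reduces to bounding the sum of conflict numbers of its edges.

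The quantitative input is a contrapositive of Theorem~\ref{thm:shallow_packing}: whenever the current graph has $N$ connected components, the minimum cross-component conflict distance is at most $\delta_N := \bigl(C n^{d_1} k^{d-d_1}/N\bigr)^{1/d}$, for the constant $C$ supplied by that theorem. Indeed, picking one representative $S_{\mathcal{C}}$ per component, if all pairwise cross-component distances strictly exceeded some $\delta$, then the $N$ representatives would form a $\delta$-separated subcollection of $\Sigma$ with each $|S_{\mathcal{C}}| \le k$, forcing $N \le \M(\delta, k, \Sigma) = O(n^{d_1} k^{d-d_1}/\delta^d)$; inverting this bound for $\delta$ yields precisely $\delta_N$.

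Consequently, at the step when $N$ components remain, the edge added to $\T$ has conflict number at most $\delta_N$, so the total conflict number of $\T$ is at most $\sum_{N=2}^{m} \delta_N = C^{1/d}\, n^{d_1/d} k^{1-d_1/d} \sum_{N=2}^{m} N^{-1/d}$. For $d > 1$ the harmonic-type sum is $O(m^{1-1/d})$ (by comparison with $\int_1^m x^{-1/d}\, dx$), which yields the claimed bound $O\bigl(n^{d_1/d} k^{1-d_1/d} m^{1-1/d}\bigr)$.

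The only point requiring a small amount of care is handling the regimes in which Theorem~\ref{thm:shallow_packing} applies: the hypothesis $k \ge \delta/2$ is automatic here since $\delta_N \le |S_1 \triangle S_2| \le 2k$, and when $\delta_N > n/2^{(d_0+1)}$ the packing number is already a constant, so only $O(1)$ components remain at that stage, contributing a lower-order $O(k)$ that is absorbed. Beyond these routine checks, the construction is a direct dualization of Welzl's argument, with the crossing-number quantity (a point crossing a segment) replaced by the conflict quantity (a point lying in a symmetric difference); no idea beyond Theorem~\ref{thm:shallow_packing} is needed, which is why the lemma is indeed ``easy'' once the Shallow Packing Lemma is in hand.
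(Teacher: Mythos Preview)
Your proposal is correct and is precisely the argument the paper has in mind: the paper omits the proof entirely, pointing to Welzl~\cite{Welzl-92} and Matou{\v s}ek~\cite[Lemma~5.18]{Mat-99}, and your greedy merge driven by the contrapositive of Theorem~\ref{thm:shallow_packing} is exactly the dualization of that argument with the Shallow Packing Lemma substituted for Haussler's. The edge-case checks you flag (the $k \ge \delta/2$ regime and the large-$\delta$ regime) are handled adequately; note that your justification ``$\delta_N \le |S_1 \triangle S_2| \le 2k$'' is slightly circular, but the conclusion still holds since whenever $\delta_N > 2k$ the cross-component distance is trivially at most $2k \le \delta_N$, and the residual $O(k)$ contribution is dominated by $n^{d_1/d} k^{1-d_1/d}$ because $k \le n$.
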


\subparagraph*{Constructing an approximation of the tree.}
In order to construct $\T$ efficiently, we relax this problem to only \emph{approximating} the spanning tree,
and then use the machinery of Har-Peled and Indyk~\cite{HI-02} in order to conclude that when $\Sigma$ is a set
system as in Corollaries~\ref{cor:halfspaces}--\ref{cor:slabs}, a $(1+\eta)$-factor approximation for $\T$ (that is,
a spanning tree whose total conflict number is at most $(1+\eta)$ of the smallest such number) can be 
computed in subquadratic time, for any $\eta > 0$. 
We present a sketch of this construction in Appendix~\ref{app:approx_mst}, and conclude:

\begin{corollary}
  \label{cor:approx_mst}
  Let $X$ be a set of $n$ points in $d$-space and let $\Sigma$ be a set of $m$ regions defined over $X$.
  Then, for any $\eta > 0$, one can compute in time 
  
  \noindent
  (i)~$O^{*}\left(n^{\frac{d}{d+1}} m^{\frac{d}{d+1}} + n + m^{1 + \frac{1}{1+\eta}} \right)$ a spanning tree of $\Sigma$, 
  where $\Sigma$ is a set of halfspaces in $d$-space,
  with overall $O\left((1+\eta)n^{\frac{\floor{d/2}}{d}} k^{\frac{\ceil{d/2}}{d}} m^{1-\frac{1}{d}}\right)$ conflicts.
  
  
  \noindent
  (ii)~$O^{*}\left(n^{\frac{d+1}{d+2}} m^{\frac{d+1}{d+2}} + n + m^{1 + \frac{1}{1+\eta}} \right)$, a spanning tree as above, where $\Sigma$ is a set of balls in $d$-space,
  with overall $O\left((1+\eta)n^{\frac{\floor{(d+1)/2}}{d+1}} k^{\frac{\ceil{(d+1)/2}}{d+1}} m^{1-\frac{1}{d+1}}\right)$ conflicts. 
  
  \noindent
  (iii)~$O^{*}\left(n^{\frac{d}{d+1}} m^{\frac{d}{d+1}} + n + m^{1 + \frac{1}{1+\eta}}\right)$, a spanning tree as above, where $\Sigma$ is a set of parallel 
  slabs in $d$-space, with overall $O\left((1+\eta)n^{1 - 1/(d+1)} k^{1/(d+1)} m^{1-\frac{1}{d+1}}\right)$ conflicts.
  
  In the above bounds $O^{*}(\cdot)$ hides a poly-logarithmic factor.
\end{corollary}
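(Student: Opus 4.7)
The plan is to combine the existence statement in Lemma~\ref{lem:tree} with a generic approximate-MST construction instantiated through geometric range-counting oracles. First, I would observe that if $G$ denotes the complete graph on the vertex set $\Sigma$ with edge weights $w(\{S,S'\}) = |S \triangle S'|$, then every spanning tree of $G$ has total weight equal to its total conflict number. Lemma~\ref{lem:tree}, together with the shatter-function bounds from Corollaries~\ref{cor:halfspaces}--\ref{cor:slabs}, exhibits a spanning tree whose total weight matches the claimed bound (without the $(1+\eta)$ factor); consequently the minimum spanning tree of $G$ already attains this bound, and any $(1+\eta)$-approximate MST does so up to the factor $(1+\eta)$, which is exactly what the corollary asserts.

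Next, I would invoke the Har-Peled--Indyk~\cite{HI-02} framework for $(1+\eta)$-approximate MST in a metric space equipped with an efficient approximate distance oracle. Their algorithm, which interleaves iterative-contraction steps with approximate bichromatic closest-pair computations on hierarchical partitions, runs in $O^{*}(m^{1+1/(1+\eta)})$ time plus the aggregated cost of the distance queries it issues. One treats $\Sigma$ as $m$ abstract points endowed with the distance $w$ and passes through the Har-Peled--Indyk black box.

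The third ingredient is the distance oracle itself. For each of the three scenarios I would batch the queries through an approximate range-counting data structure on $X$: (i) for halfspaces, $w(\{S,S'\})=|(H \triangle H') \cap X|$ is a constant combination of halfspace-range counts, answered in aggregate cost $O^{*}((nm)^{d/(d+1)})$ via Matou\v{s}ek's partition trees; (ii) for balls, the standard lifting to paraboloids in $\reals^{d+1}$ reduces the problem to case (i) in one higher dimension, yielding $O^{*}((nm)^{(d+1)/(d+2)})$; (iii) parallel slabs split into two bounding halfspaces each and again reduce to case (i). Summing the two contributions with the MST cost produces the three stated running times.

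The main obstacle will be to aggregate the distance queries generated adaptively by Har-Peled--Indyk so as to attain the batched range-counting cost, rather than paying the naive per-query price $O^{*}(n^{1-1/d})$ for each of the $\Theta(m)$ edges examined by the MST routine. I would address this by routing all queries through a single partition-tree traversal and by using $(1+\eta')$-approximate counts with $\eta'$ a small constant fraction of $\eta$, which still suffices to recover a $(1+\eta)$-approximate MST by the standard error-propagation analysis of~\cite{HI-02}.
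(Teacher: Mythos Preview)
Your first paragraph is fine and matches the paper: Lemma~\ref{lem:tree} bounds the weight of \emph{some} spanning tree, hence of the MST, and so any $(1+\eta)$-approximate MST inherits the stated conflict bound.

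The gap is in how you intend to realize the Har-Peled--Indyk machinery. You describe their algorithm as a generic approximate-MST routine that issues adaptive distance queries, and you propose to answer those queries by batched halfspace range counting. But the queries are adaptive---each bichromatic closest-pair step depends on the contractions performed so far---so you cannot ``route all queries through a single partition-tree traversal''; that sentence is where the argument breaks. Moreover, a single distance $|S\triangle S'|$ for two halfspaces is not a constant combination of halfspace counts on $X$: it requires $|S\cap S'\cap X|$, i.e.\ a count in the intersection of two halfspaces, which is a different (and more expensive) range type. Your proposed fix via $(1+\eta')$-approximate counts does not address either issue.

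The paper avoids this entirely by following the actual structure of~\cite{HI-02}: one first \emph{embeds} $\Sigma$ into a polylog-dimensional Hamming space via the ID maps $f_{P_1},\ldots,f_{P_\mu}$ for $\mu=\polylog(m)$ random subsets $P_i\subseteq X$, and only afterwards runs the approximate-MST/ANN machinery in that embedded space, at cost $O^{*}(m^{1+1/(1+\eta)})$, with no further access to $X$. The range-counting cost $O^{*}\big((nm)^{d/(d+1)}+n+m\big)$ (resp.\ $(d+1)/(d+2)$ for balls) arises solely from computing the embedding, which is a completely non-adaptive task: for each of the $m$ regions $S$ and each of the $\mu$ subsets $P_i$, one needs $|S\cap P_i|$ (or a canonical representation thereof) to produce the ID $f_{P_i}(S)$. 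That is a single batched range-counting instance with $m$ ranges, and standard partition-tree bounds apply directly. So the $O^{*}((nm)^{d/(d+1)})$ term in the paper and in your sketch have the same form but come from different computations; yours is not justified as stated.
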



Based on the above machinery, we next propose a general framework for updating the optimal solution (or an approximate solution)
of a prescribed geometric optimization problem, over all regions in $\Sigma$. 

\subparagraph*{A general framework.}
Let $X$, $\Sigma$ be as in Lemma~\ref{lem:tree}, and let $\f: \Sigma \rightarrow {\reals}$ be a function that assigns
real values on the sets $S \in \Sigma$ (each being a subset of $X$). For example, $\f$ may correspond to 
diameter, width, radius of the smallest enclosing ball, volume of the minimum bounding box, etc.,
see, e.g.,~\cite{AHV-04}, where these measures are referred to as ``faithful measures''. 
Our goal is to efficiently compute $\f(S)$ for each $S \in \Sigma$. 

Specifically, we assume to have a data-structure $\D$ that maintains a subset $S \subseteq X$ with the following properties:
(i) The time to preprocess $\D$ is $P(n)$.
(ii) The time to update $\D$ (that is, inserting or deleting an element from $\D$) is $U(n)$.
(iii) At any given time, querying $\D$ for the value of $\f$, w.r.t. the set $K$ of the currently stored points, costs $Q(|K|)$ time. 

Having this machinery at hand, in a brute-force approach, $\D$ is initially empty. Then, for each $S \in \Sigma$, we insert its elements into $\D$, 
obtain $\f(S)$ by querying $\D$, and then remove all these elements from $\D$. We proceed in this manner until all sets $S \in \Sigma$ are exhausted.
Under the assumption that $\Sigma$ is $k$-shallow, the resulting running time is $O(P(n) + m k U(n) + m Q(k))$. 
On the other hand, with the existence of a $(1+\eta)$-factor approximation for the spanning tree $\T$ (with properties as in Lemma~\ref{lem:tree}), 
we can proceed as follows. With a slight abuse of notation, we also denote the approximate tree by $\T$.
Initially, $\D$ is empty as above, and we choose an arbitrary set $S \in \Sigma$, for which we compute $\f(S)$ as above.
Then we traverse $\T$ from $S$ in a BFS manner, update $\D$ accordingly, and make a query at each vertex tracked during the search.
Clearly, the number of these updates is proportional to the overall number of conflicts in $\T$, and thus the overall running time is
$O\left(C(n,m,\eta) + P(n) + k U(n) + (1+\eta) n^{\frac{d_1}{d}} k^{1 - \frac{d_1}{d}} m^{1 - \frac{1}{d}} U(n) + m Q(k) \right)$, 
where $C(n,m,\eta)$ is the time to construct (a $(1+\eta)$-factor approximation for) $\T$.
%
We are interested in the scenario where this solution outperforms the brute-force algorithm (at least for some values of $k$).
Below we describe a concrete scenario, related to the approximation of the faithful measures listed above, which demonstrates 
the usefulness of our framework.

\subparagraph*{Dynamic coresets.}
Based on the seminal work of Agarwal~\etal.~\cite{AHV-04} on \emph{coresets}, Chan~\cite{Chan-08} presented a data structure, which
maintains a constant-size coreset, with respect to ``extent'', in $U(n) = O(\log{n})$ update time, for all constant dimensions, with linear 
space and preprocessing time, where the constant of proportionality depends on the error parameter $\eps > 0$.
Using this machinery, it is straightforward to obtain dynamic $(1+\eps)$-factor approximation algorithms with logarithmic update time
for computing the faithful measures stated above. The time $Q(k)$ to compute an approximation for these measures depends on $\eps$
and the dimension $d$. For simplicity of presentation, we omit the dependency on $\eps$ in the bounds of $U(n)$, $Q(k)$,
and compare the performance of our approach w.r.t. the brute force computation, when we consider only the parameters $n$, $m$, and $k$.

With this machinery, the brute-force algorithm runs in $O(m k \log{n} + n)$ time, whereas our algorithm runs in 
$O\left(C(m,n,\eta) + k \log{n} + (1+\eta)n^{\frac{d_1}{d}} k^{1 - \frac{d_1}{d}} m^{1 - \frac{1}{d}} \log{n} + m + n\right)$ time. 
We now consider the bounds stated in Corollary~\ref{cor:approx_mst} and set $\eta = \log{m}$, in which case the term 
$O^{*}(m^{1 + \frac{1}{1+\eta}})$ becomes nearly-linear and we pay only an extra logarithmic factor in the total number of conflicts.
We then conclude:


\begin{corollary}
  \label{cor:compute_measure}
  Let $X$ be a set of $n$ points in $d$-space and a let $\Sigma$ be a set of $m$ regions defined over $X$,
  where, for each $S \in \Sigma$, $|S| \le k$, where $1 \le k \le n$ is an integer parameter.
  Then one can compute a $(1+\eps)$-factor approximation for the aforementioned faithful measures in time that is the minimum
  of $O(m k \log{n} + n)$ and
  
  \noindent
  (i)~$O^{*}\left(n^{\frac{d}{d+1}} m^{\frac{d}{d+1}} + k \log{n} + n^{\frac{\floor{d/2}}{d}} k^{\frac{\ceil{d/2}}{d}} m^{1 - \frac{1}{d}} + m +n\right)$, 
  if $\Sigma$ is a collection of halfspaces, in $d$-space.
  
  
  \noindent
  (ii)~$O^{*}\left(n^{\frac{d+1}{d+2}} m^{\frac{d+1}{d+2}} + k \log{n} + 
  n^{\frac{\floor{(d+1)/2}}{d+1}} k^{\frac{\ceil{(d+1)/2}}{d+1}} m^{1- \frac{1}{d+1}} + m + n\right)$, if $\Sigma$ is a collection of balls in $d$-space.
  
  \noindent
  (iii)~$O^{*}\left(n^{\frac{d}{d+1}} m^{\frac{d}{d+1}} + k \log{n} + 
  n^{\frac{d}{d+1}} k^{\frac{1}{d+1}} m^{1- \frac{1}{d+1}} + m + n\right)$, 
  if $\Sigma$ is a collection of parallel slabs in $d$-space.
  
  The constant of proportionality in each of these bounds depends on $\eps$ and $d$. 
\end{corollary}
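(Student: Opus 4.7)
The plan is to combine the general framework introduced just before the corollary with two concrete ingredients: Chan's dynamic coreset data structure~\cite{Chan-08} for faithful measures, and the approximate spanning tree bounds in Corollary~\ref{cor:approx_mst}. Taking the minimum of two running times, one from a naive brute-force traversal and one from traversing the approximate spanning tree, then yields the three bounds.

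First I would instantiate the parameters of the data-structure $\D$ used in the general framework. By~\cite{Chan-08}, maintaining a constant-size coreset for extent (from which all the faithful measures listed in~\cite{AHV-04} can be approximated to within $(1+\eps)$) can be done with $P(n) = O(n)$ preprocessing time, $U(n) = O(\log n)$ update time, and $Q(k) = O(1)$ query time, where the constants depend on $\eps$ and $d$. Plugging these into the brute-force bound $O(P(n) + m k U(n) + m Q(k))$ gives the first candidate $O(m k \log n + n)$, which is already one of the two expressions in the minimum.

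Next I would analyse the spanning-tree-based algorithm. Substituting the same $P(n)$, $U(n)$, $Q(k)$ into the bound $O(C(n,m,\eta) + P(n) + k U(n) + (1+\eta) n^{d_1/d} k^{1 - d_1/d} m^{1-1/d} U(n) + m Q(k))$ yields the running time
\[
O\!\left(C(n,m,\eta) + k \log n + (1+\eta)\, n^{d_1/d} k^{1 - d_1/d} m^{1-1/d} \log n + m + n\right),
\]
where $C(n,m,\eta)$ is the time bound from Corollary~\ref{cor:approx_mst} for the corresponding geometric setting, and $d_1$ is the Clarkson--Shor exponent provided by Theorem~\ref{thm:shallow_packing} for that setting (namely $d_1 = \lfloor d/2 \rfloor$ for halfspaces, $d_1 = \lfloor (d+1)/2 \rfloor$ for balls with ambient dimension $d+1$, and $d_1 = d$ for slabs with ambient dimension $d+1$). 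Setting $\eta = \log m$ makes the term $m^{1 + 1/(1+\eta)}$ inside $C(n,m,\eta)$ equal to $m^{1 + 1/(1 + \log m)} = O(m)$ up to a polylogarithmic factor, so it is absorbed into the $O^{*}(\cdot)$ notation; the factor $(1+\eta)$ multiplying the conflict term is also absorbed, since $\log n$ was already present there.

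The main obstacle is merely bookkeeping: one must verify, case by case, that the dominant geometric term from $C(n,m,\eta)$ in Corollary~\ref{cor:approx_mst} matches the first term stated in each of (i)--(iii) of the corollary, and that the conflict term reduces correctly after substituting the appropriate $(d, d_1)$ pair. Once this is done, taking the minimum with $O(m k \log n + n)$ yields the claimed bounds, with the constants of proportionality inheriting their dependence on $\eps$ (from Chan's coreset) and on $d$ (from both the coreset and Theorem~\ref{thm:shallow_packing}).
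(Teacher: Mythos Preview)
Your proposal is correct and follows essentially the same argument as the paper: instantiate the general framework with Chan's dynamic coreset parameters ($P(n)=O(n)$, $U(n)=O(\log n)$, $Q(k)=O(1)$), set $\eta=\log m$ so that the $m^{1+1/(1+\eta)}$ term in Corollary~\ref{cor:approx_mst} becomes near-linear, and then substitute the appropriate $(d,d_1)$ Clarkson--Shor exponents for halfspaces, balls, and slabs to recover the three stated bounds, taking the minimum with the brute-force $O(mk\log n + n)$.
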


\vspace{-2ex}
\subsection{Geometric Discrepancy}
\label{sec:geom_disc}
Given a set system $(X,\Sigma)$ as above, we now wish to color the points of $X$ by two colors, such that in each set of $\Sigma$
the deviation from an even split is as small as possible. 

Formally, a \emph{two-coloring} of $X$ is a mapping $\chi : X \rightarrow \{-1 ,+1\}$.
For a set $S \in \Sigma$ we define $\chi(S) := \sum_{x \in S} \chi(x)$.
The \emph{discrepancy} of $\Sigma$ is then defined as $\disc(\Sigma) := \min_{\chi} \max_{S \in \Sigma} | \chi(S)|$.

In a previous work~\cite{Ezra-14}, the author presented size-sensitive discrepancy bounds for set systems of halfspaces defined over
$n$ points in $d$-space.  
These bounds were achieved by combining the \emph{entropy method}~\cite{LM-12} with $\delta$-packings, and, as observed in~\cite{Ezra-14}, 
they are optimal up to a poly-logarithmic factor.
Incorporating our bound in Theorem~\ref{thm:shallow_packing} into the analysis in~\cite{Ezra-14}, the bounds on $\chi(S)$
improve by a $\sqrt{\log{n}}$ factor. Specifically, we obtain
(we omit the technical details in this version):

\begin{corollary}
  \label{cor:disc_halfspaces}
  Let $\Sigma$ be a set system of halfspaces defined over $n$-points in $d$-space ($d \ge 3$).
  Then, there is a two-coloring $\chi$, such that for each $S \in \Sigma$,
  $\chi(S) = O\left(|S|^{1/4} n^{1/4 - 1/(2d)} \log^{1/2d}{n} \right)$, for $d \ge 4$ even,
  and $\chi(S) = O\left( |S|^{1/4 + 1/(4d)} n^{1/4 - 3/(4d)} \log^{1/2d}{n} \right)$, for $d \ge 5$ odd, 
  where the constant of proportionality depends on $d$.
  When $d = 3$ the bound on $\chi(S)$ is 
  $O\left( |S|^{1/3} \log^{7/6}{n} \right)$.\footnote{We note that the case $d=2$ in Corollary~\ref{cor:disc_halfspaces} 
    has been resolved by Har-Peled and Sharir~\cite{HS-11}.}
\end{corollary}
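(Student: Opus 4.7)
The plan is to re-run the discrepancy analysis of the author's previous work~\cite{Ezra-14} essentially verbatim, and to substitute the improved packing bound from Theorem~\ref{thm:shallow_packing} (specialized to halfspaces via Corollary~\ref{cor:halfspaces}) in place of the looser packing bound used there. The bound of~\cite{Ezra-14} contains a $\log^{d/2}(n/\delta)$ factor that arises from taking the square root of the $\log^d(n/\delta)$ overhead in the old packing number; Theorem~\ref{thm:shallow_packing} removes this overhead, so the net effect on $\chi(S)$ is an improvement by a factor of $\sqrt{\log n}$.

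In more detail, I would first recall the algorithmic partial-coloring step of Lovett-Meka~\cite{LM-12} as it is employed in~\cite{Ezra-14}: given a set system $(X,\Sigma)$, one obtains a partial coloring $\chi$ fixing at least $|X|/2$ of the points, with $|\chi(S)|$ controlled by a weighted combination of $\sqrt{\log \M(\delta,k,\Sigma)}$ taken over a dyadic sequence of scales $\delta$. Substituting $\M(\delta,k,\V) = O(n^{\floor{d/2}} k^{\ceil{d/2}}/\delta^d)$ from Corollary~\ref{cor:halfspaces} yields precisely the expression from~\cite{Ezra-14} but with the $\log^{d/2}(n/\delta)$ factor under the square root collapsed to a constant. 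One then iterates the partial coloring for $O(\log n)$ rounds: at round $i$ the lemma is applied to the $n/2^{i-1}$ still-uncolored points, the dyadic scales are balanced against $|S|$ and $n$ using the Clarkson-Shor exponents $\floor{d/2}$ and $\ceil{d/2}$, and the contributions to $|\chi(S)|$ are summed geometrically over the rounds. The arithmetic is identical to that of~\cite{Ezra-14} up to the absent $\log^{d/2}$ factor, so the stated bounds for even $d\ge 4$ and odd $d\ge 5$ drop out directly from this substitution.

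For $d=3$ one has $d_1=1$ and $d-d_1=2$, which forces a slightly different balance of the dyadic scales and produces the bound $|S|^{1/3}\log^{7/6} n$, matching the dedicated case analysis of~\cite{Ezra-14} in this regime. The only real obstacle is bookkeeping: one must verify that every invocation of the packing bound in~\cite{Ezra-14} is compatible with the cleaner Theorem~\ref{thm:shallow_packing} and that no other ingredient in that proof --- relative approximations, $\eps$-nets, or the geometric summation across rounds --- silently reintroduces a $\sqrt{\log n}$ factor. I expect no obstruction, since in~\cite{Ezra-14} the dependence on the packing number is isolated cleanly in the partial-coloring step, and every other estimate used there is $\polylog$-free in the relevant range.
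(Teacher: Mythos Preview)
Your proposal is correct and follows exactly the approach the paper indicates: the paper itself states only that ``incorporating our bound in Theorem~\ref{thm:shallow_packing} into the analysis in~\cite{Ezra-14}, the bounds on $\chi(S)$ improve by a $\sqrt{\log{n}}$ factor'' and explicitly omits the technical details, so your sketch of the Lovett--Meka partial-coloring step, the dyadic summation, and the per-round substitution of Corollary~\ref{cor:halfspaces} is in fact more detailed than what the paper provides.
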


%


\section{Concluding Remarks and Further Research}
\label{sec:conclusions}

We note that Corollary~\ref{cor:dual} implies that one can pack the ``shallow level'' in 
an arrangement of $(d-1)$-variate function graphs $F$ (as defined in Section~\ref{sec:geom_settings}) with a relatively small number of 
Hamming balls. In fact, if those functions are just hyperplanes, then by a standard point-hyperplane
duality, Corollary~\ref{cor:halfspaces} implies that the number of Hamming balls that pack the at most 
$k$-level in the underlying arrangement is $O(n^{\floor{d/2}} k^{\ceil{d/2}}/\delta^d)$. Roughly speaking, this particular
bound can be obtained by either (i) \emph{shallow cuttings}~\cite{Mat-92} and a standard reduction between packing
and covering. Roughly speaking, this is a coverage of the shallow level of the arrangement by a collection $\Xi$ of (possibly unbounded) 
pairwise-disjoint ``primitive'' cells, such that the interior of each cell is crossed by a small fraction of the function graphs in $F$. 
%
When $F$ is a collection of hyperplanes the packing number and the bound on $|\Xi|$ are asymptotically the same (where each cell of $\Xi$
corresponds to a ``center'' of a Hamming ball). 
Or (ii) the bound $\Omega(\delta^d)$ on the ``volume'' of a Hamming ball of radius $\delta$ (as observed in~\cite{Welzl-92})
and the fact that the complexity of the at most $k$-level in an arrangement of $n$ hyperplanes is 
$O(n^{\floor{d/2}} k^{\ceil{d/2}})$~\cite{CS-89}.\footnote{We omit the official definition of shallow cuttings and the straightforward details in this discussion, 
  and refer the non-expert reader to~\cite{AES-00, Mat-92, Mat-02, Welzl-92} and the references therein.} 
Nevertheless, these approaches may not be applicable when the input functions are more general (e.g., satisfy the properties
stated in Section~\ref{sec:geom_settings} for, say, $d \ge 4$), in this case one may use \emph{linearization} but that may result in an overestimated bound.
We hope our bound on the packing number will be useful in geometric computing, and, in particular
in the context of spanning trees of low (total) crossing number.

The analogy between shallow packings for dual set systems and shallow cuttings may cause one to interpret shallow packings
as the ``primal version'' of shallow cuttings. In this paper we named two useful applications for shallow packings.
We hope to find additional applications in geometry and beyond. 


\subparagraph*{Acknowledgments.}
The authors wishes to thank Boris Aronov, Sariel Har-Peled, Aryeh Kontorovich, and Wolfgang Mulzer for useful discussions and suggestions.
In particular, the author is grateful to Sariel Har-Peled for suggesting the application described in Section~\ref{sec:spanning_trees}.
Last but not least, the author deeply thanks Ramon Van Handel, for various discussions and for spotting an error in an earlier version of this paper.
In particular, work on this paper began due to a discussion with Ramon after a talk the author gave in Princeton University.






\appendix

%

\section{An Example of a Shallow Set System with Large Packing Numbers}
\label{app:rectangles}

\begin{figure*} 
  \centering
  \input{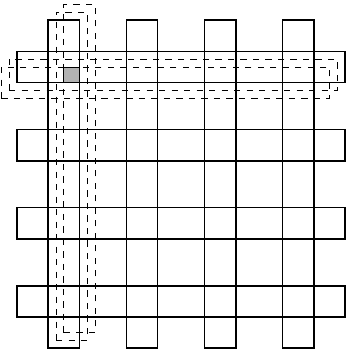_t}  
  \caption{\sf\small{A grid of $\frac{n}{\delta} \times \frac{n}{\delta}$ axis-parallel rectangles, 
    each of which with multiplicity $\delta/2$. The multiplicity in the figure is depicted 
    only for the leftmost vertical and the top horizontal rectangles.
    The small shaded rectangle in the figure is a $\delta$-shallow cell of the arrangement.}}
  \label{fig:rectangles}
\end{figure*}

Continuing our construction from Section~\ref{sec:intro}, 
the ground set is a collection of axis-parallel rectangles, and the vectors $\V$ represent subsets of rectangles 
that cover a point in the plane. For simplicity of exposition, we define these vectors to represent all two-dimensional cells 
in the arrangement of the given rectangles. 
It is well known that the primal shatter function of $\V$ is quadratic 
(see, e.g.,~\cite{Har-Peled-11}), and therefore, by Theorem~\ref{thm:packing}, the packing number is $O((n/\delta)^2)$.
Nevertheless, we claim that even when the arrangement is shallow, the asymptotic bound on the packing number is not any
better than $(n/\delta)^2$. Indeed, fix a positive even parameter $\delta > 0$, and suppose, without loss of generality,
that $n/\delta$ is an integer number. Consider now an $\frac{n}{\delta} \times \frac{n}{\delta}$ 
grid of long and skinny rectangles, where each rectangle in the grid is duplicated $\delta/2$ times 
(with a possibly infinitesimal perturbation), as illustrated in Figure~\ref{fig:rectangles}. 
Clearly, each (two-dimensional) cell 
in the arrangement is covered by at most $\delta$ rectangles, and thus the set system is $\delta$-shallow. 
Consider now only the cells 
at ``depth'' $\delta$ (that is, they are covered by precisely $\delta$ rectangles), and let $\F \subset \V$ 
be the set of their representative vectors.  
It is easy to verify that, for each pair $v, v' \in \F$, the distance $\rho(v,v')$ is at least $\delta$ 
(see once again Figure~\ref{fig:rectangles}), and thus $\F$ is both $\delta$-separated and $\delta$-shallow.
However, by construction, we have $|\F| = \Omega((n/\delta)^2)$, and thus the $\delta$-shallowness assumption does not yield
an improvement over the general case.

\section{Overview of Haussler's Approach}
\label{app:Haussler_proof}

Let $\V \subseteq \{0,1\}^n$ be a collection of indicator vectors of primal shatter dimension $d$.
We denote its VC-dimension by $d_0$; as discussed in Section~\ref{sec:intro} $d_0 = O(d\log{d})$.



We first form a probability distribution $P$ over $\V$, implying that $\V$ can be viewed\footnote{For the time being, $P$ is an arbitrary distribution, but later on (Lemma~\ref{lem:expected_var1}) it is taken to be the uniform distribution in the obvious way, where each vector in $\V$ is equally likely to be chosen. This distribution, however, may not remain uniform after the projection of $\V$ onto a proper subsequence $I' = (i_1, \ldots, i_m)$ of $m < n$ indices, as several vectors in $\V$ may be projected onto the same vector in $\V_{|_{I'}}$.} as an 
$n$-dimensional random variable taking values in $\{0,1\}^n$. Thus its components $\V_i$, $i=1, \ldots, n$,
represent $n$ correlated indicator random variables (Bernulli random variables), and each of their values is determined  
by randomly selecting a vector $\vv \in \V$, and letting $\V_i$ be the $i$th component of $\vv$.
The variance of a Bernulli random variable $B$ is known to be $\Prob[B=1] \Prob[B=0]$, and then, for a sequence
$B_1, \ldots, B_m$ of Bernulli random variables, the \emph{conditional variance} of $B_m$ given $B_1, \ldots, B_{m-1}$
is defined as
$$
\Var(B_m | B_1, \ldots, B_{m-1}) = \sum_{\vv \in \{0,1\}^{m-1}} \Prob(\vv) \Prob(B_m = 1 | \vv) \left(1 - \Prob(B_m =1 | \vv)\right) ,
$$
where $\Prob(\vv) = \Prob(B_1 = \vv_1, B_2 = \vv_2, \ldots, B_{m-1} = \vv_{m-1})$, and 
$\Prob(B_m = 1 | \vv) = \Prob(B_m = 1 | B_1 = \vv_1, B_2 = \vv_2, \ldots, B_{m-1} = \vv_{m-1})$.    

A key property in the analysis of Haussler~\cite{Haussler-95} lies in the density of a \emph{unit distance graph} $G =(\V,E)$, 
defined over $\V$, whose edges correspond to all pairs $\uu, \vv \in \V$, whose symmetric difference distance is (precisely) $1$. 
In other words, $\uu$, $\vv$ appear as neighbors on the unit cube $\{0,1\}^n$.  
It has been shown by Haussler~\etal~\cite{HLW-90} that the density of $G$ is bounded by the VC-dimension of $\V$,
that is, $|E|/|\V| \le d_0$; see also~\cite{Haussler-95} for an alternative proof using the technique of ``shifting''.\footnote{We cannot guarantee
  such a relation when the VC-dimension $d_0$ is replaced by the primal shatter dimension $d$, and therefore we proceed with the analysis using this ratio.} 
Then, this low density property is exploited in order to show that once we have chosen $(n-1)$ coordinates of the random
variable $\V$, the variance in the choice of the remaining coordinate is relatively small. 
That is:

\begin{lemma}[~\cite{Haussler-95}]
  \label{lem:sum_variance}
  For any distribution $P$ on $\V$,
  $$
  \sum_{i=1}^n \Var(\V_i | \V_1, \ldots, \V_{i-1}, \V_{i+1}, \ldots, \V_n) \le d_0 .
  $$
\end{lemma}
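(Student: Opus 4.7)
The plan is to express $\sum_{i=1}^n \Var(\V_i \mid \V_1,\ldots,\V_{i-1},\V_{i+1},\ldots,\V_n)$ \emph{exactly} as a weighted sum over the edges of the unit-distance graph $G=(\V,E)$, and then to bound this weighted sum by a level-set decomposition that reduces to the HLW density bound.

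First I would unpack a single conditional variance. Fix $i$, let $\ww \in \{0,1\}^{n-1}$ encode the values of the coordinates other than $i$, and let $\ww^+$ (resp., $\ww^-$) denote $\ww$ with a $1$ (resp., $0$) inserted at position $i$. A direct computation from the definition of the conditional variance stated just above the lemma shows that the $\ww$-contribution to $\Var(\V_i \mid \V_1,\ldots,\V_{i-1},\V_{i+1},\ldots,\V_n)$ simplifies to $\Prob(\ww^+)\,\Prob(\ww^-)/(\Prob(\ww^+)+\Prob(\ww^-))$, which vanishes unless both $\ww^+, \ww^- \in \V$, i.e., unless $\{\ww^+,\ww^-\}$ is a unit-distance edge of $G$ differing only in coordinate $i$. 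Summing over $i$ and identifying each such pair with an edge of $G$ yields the exact identity
$$
\sum_{i=1}^n \Var(\V_i \mid \V_1,\ldots,\V_{i-1},\V_{i+1},\ldots,\V_n) \;=\; \sum_{\{\uu,\vv\} \in E} \frac{\Prob(\uu)\,\Prob(\vv)}{\Prob(\uu)+\Prob(\vv)}.
$$

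Next I would apply the elementary inequality $ab/(a+b)\le \min(a,b)$ to upper-bound the right-hand side by $\sum_{\{\uu,\vv\}\in E}\min(\Prob(\uu),\Prob(\vv))$, and then attack this weighted edge sum by a layer-cake decomposition. Using $\min(\Prob(\uu),\Prob(\vv)) = \int_0^\infty \mathbf{1}[\Prob(\uu)\ge t,\;\Prob(\vv)\ge t]\,dt$ and defining the sublevel sets $A_t := \{\vv \in \V : \Prob(\vv) \ge t\}$, Fubini yields
$$
\sum_{\{\uu,\vv\}\in E} \min(\Prob(\uu),\Prob(\vv)) \;=\; \int_0^\infty |E(A_t)|\,dt,
$$
where $E(A_t)$ counts unit-distance edges of $G$ whose both endpoints lie in $A_t$. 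Since VC-dimension is hereditary, each $A_t\subseteq \V$ has VC-dimension at most $d_0$, so applying the HLW edge-density bound to $A_t$ gives $|E(A_t)| \le d_0\,|A_t|$. Integrating against $dt$ and using $\int_0^\infty |A_t|\,dt = \sum_{\vv}\Prob(\vv) = 1$ would deliver the claimed inequality.

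The main obstacle is precisely the passage from the HLW bound, which in its natural formulation controls an \emph{unweighted} edge count in a set system, to an inequality involving the \emph{arbitrary} distribution $P$ on $\V$. The level-set trick sidesteps this cleanly by writing the $P$-weighted sum as a superposition over thresholds $t$ of unweighted edge counts inside the subsystems $A_t$, each of which still has VC-dimension at most $d_0$ and therefore inherits HLW. A more naive attempt -- e.g., charging each edge to its endpoint of smaller probability -- would force one to control the maximum out-degree in some probability-respecting orientation of $G$, a quantity not directly bounded by the VC hypothesis (only the hereditary average degree is), so the layer-cake reduction appears essential.
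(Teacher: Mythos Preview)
The paper does not supply its own proof of this lemma: it is stated as Lemma~\ref{lem:sum_variance} with a citation to Haussler~\cite{Haussler-95}, preceded only by the remark that the HLW density bound $|E|/|\V|\le d_0$ holds (with an alternative proof via ``shifting'' in~\cite{Haussler-95}). So there is nothing in the paper to compare your argument against line by line.

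That said, your proof is correct and is essentially the standard way to derive the weighted statement from the hereditary edge-density bound. Your computation of the per-term contribution $\Prob(\ww^+)\Prob(\ww^-)/(\Prob(\ww^+)+\Prob(\ww^-))$ is right, the identity with the edge-weighted sum over $E$ is exact, the bound $ab/(a+b)\le\min(a,b)$ is immediate, and the layer-cake step together with the hereditarity of VC-dimension cleanly reduces everything to the unweighted HLW inequality applied to each superlevel set $A_t$. The final integration $\int_0^\infty |A_t|\,dt=1$ closes the argument.

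For context: Haussler's original proof in~\cite{Haussler-95} proceeds somewhat differently, working with the shifting technique directly rather than passing through a level-set decomposition; the paper alludes to this but does not reproduce it. Your route is arguably cleaner in that it isolates the unweighted HLW bound as a black box and shows transparently how the arbitrary distribution $P$ is handled. Your closing remark about why a naive orientation/charging argument would fail (only hereditary \emph{average} degree is controlled, not maximum out-degree) is a nice observation and correctly identifies why the layer-cake reduction is the right move here.
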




As observed in \cite{Haussler-95}, Lemma~\ref{lem:sum_variance} continues to hold on any restriction of $\V$ to a sequence 
$I' = \{i_i,\ldots, i_m\}$ of $m \le n$ indices. Indeed, when projecting $\V$ onto $I'$ the VC-dimension in the resulting set system
remains $d_0$. Furthermore, the conditional variance is now defined w.r.t. the induced probability distribution on $\V_{|_{I'}}$ in the 
obvious way, where the probability to obtain a sequence of $m$ values corresponds to an appropriate marginal distribution, 
that is, $\Prob_{|_{I'}}(u_1, \ldots, u_m) = \Prob(\vv \in \V \mid v_{i_j} = u_j, 1 \le j \le m)$.
With this observation, we can rewrite the inequality stated in Lemma~\ref{lem:sum_variance} as
$$
\sum_{i=1}^m \Var(\V_{i_{j}} | \V_{i_{1}}, \ldots, \V_{i_{j-1}}, \V_{i_{j+1}}, \ldots, \V_{i_{m}}) \le d_0 .
$$
If $I'$ is a sequence chosen uniformly at random (over all such $m$-tuples), then
when averaging over all choices of $I'$ 
we clearly obtain: 
$$
\EE\left[ \sum_{i=1}^m \Var(\V_{i_{j}} | \V_{i_{1}}, \ldots, \V_{i_{j-1}}, \V_{i_{j+1}}, \ldots, \V_{i_{m}})\right] \le d_0 ,
$$
or
$$
\sum_{i=1}^m \EE\left[\Var(\V_{i_{j}} | \V_{i_{1}}, \ldots, \V_{i_{j-1}}, \V_{i_{j+1}}, \ldots, \V_{i_{m}})\right] \le d_0 ,
$$
by linearity of expectation.
In fact, by symmetry of the random variables $\V_{i_{j}}$ (recall that $I'$ is a random $m$-tuple) each of the summands in the above inequality 
has an equal contribution, and thus, in particular (recall once again that the expectation is taken over all choices of $I'= \{i_i,\ldots, i_m\}$):
\begin{equation}
  \label{eq:cond_var}
  \EE_{I'} \left[\Var(\V_{i_m} | \V_{i_{1}},\ldots, \V_{i_{m-1}})\right] \le \frac{d_0}{m} ,
\end{equation}
where we write $\EE_{I'}[\cdot]$ to emphasize the fact that the expectation is taken over all choices of $I'$.
The above bound is now integrated with the next key property: 

\begin{lemma}[~\cite{Haussler-95}]
  \label{lem:expected_var1}
  Let $\V$ be $\delta$-separated subset of $\{0,1\}^n$, for some $1 \le \delta \le n$ integer, and form a uniform distribution $P$
  on $\V$. Let $I = (i_1, \ldots, i_{m-1})$ be a sequence of $m-1$ distinct indices between $1$ and $n$, where $m$ is any integer between $1$ 
  and $n$. Suppose now that another index $i_m$ is drawn uniformly at random from the remaining $n-m+1$ indices. Then
  $$
  \EE \left[ \Var\left(\V_{i_m} | \V_{i_1} \ldots, \V_{i_{m-1}} \right) \right] \ge
  \frac{\delta}{2(n-m+1)} \left(1 - \frac{ |\V_{|_{I}}| }{|\V|} \right) ,
  $$
  where the conditional variance is taken w.r.t. the distribution $P$, 
  and the expectation is taken w.r.t. the random choice of $i_m$.
\end{lemma}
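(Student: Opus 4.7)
The plan is to unpack the conditional variance directly from the uniform distribution on $\V$ and then exploit $\delta$-separation through a pairwise-distance counting argument performed one ``fiber'' of the projection at a time. For each $\vv \in \V_{|_I}$ let the fiber be $\V[\vv] = \{\uu \in \V \mid \uu_{|_I} = \vv\}$, write $N_\vv = |\V[\vv]|$, and for $j \notin I$ let $k_j^\vv$ count the $\uu \in \V[\vv]$ with $\uu_j = 1$. Under the uniform distribution $P$, the pattern $\vv$ occurs with probability $N_\vv/|\V|$ and, conditioned on it, $\V_j$ is Bernoulli with parameter $k_j^\vv/N_\vv$. Thus
$$\Var(\V_{i_m} \mid \V_{i_1},\ldots,\V_{i_{m-1}}) \;=\; \frac{1}{|\V|}\sum_{\vv \in \V_{|_I}} \frac{k_{i_m}^\vv(N_\vv - k_{i_m}^\vv)}{N_\vv},$$
and averaging over $i_m$ chosen uniformly from the $n-m+1$ indices outside $I$ reduces everything to lower-bounding $\sum_{j \notin I} k_j^\vv(N_\vv - k_j^\vv)$, one fiber at a time.

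The crux is to bring $\delta$-separation into each fiber. Since any two distinct $\uu, \ww \in \V[\vv]$ agree on every coordinate of $I$, their Hamming distance equals $\rho(\uu,\ww) = \sum_{j \notin I} |\uu_j - \ww_j|$, which exceeds $\delta$. Summing this over all ordered pairs with $\uu \ne \ww$ in $\V[\vv]$ and swapping the order of summation gives
$$\sum_{j \notin I} \sum_{\uu \ne \ww} |\uu_j - \ww_j| \;\ge\; N_\vv(N_\vv - 1)\,\delta.$$
For a fixed coordinate $j$, the inner sum counts ordered pairs that disagree at $j$, which for binary vectors equals exactly $2 k_j^\vv (N_\vv - k_j^\vv)$. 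This produces the per-fiber bound
$$\sum_{j \notin I} k_j^\vv(N_\vv - k_j^\vv) \;\ge\; \frac{N_\vv(N_\vv - 1)\,\delta}{2},$$
which is the heart of the argument.

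To finish, substitute back into the expression for the averaged variance. One factor of $N_\vv$ cancels against the $1/N_\vv$ inside, and summing $N_\vv - 1$ over $\vv \in \V_{|_I}$ telescopes to $|\V| - |\V_{|_I}|$, yielding
$$\EE_{i_m}\!\left[\Var(\V_{i_m} \mid \V_{i_1},\ldots,\V_{i_{m-1}})\right] \;\ge\; \frac{\delta}{2(n-m+1)}\left(1 - \frac{|\V_{|_I}|}{|\V|}\right),$$
as required. The one genuinely non-obvious step is the combinatorial identity $\sum_{\uu,\ww}|\uu_j - \ww_j| = 2 k_j^\vv(N_\vv - k_j^\vv)$, which is precisely the bridge that converts the $\delta$-separation hypothesis (a statement about pairs of vectors) into a lower bound on the single-coordinate variance. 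Once that bridge is in hand, no further obstacle arises; the remaining work is just keeping the fiber decomposition, the uniform distribution $P$, and the two layers of averaging (over the fibers and over $i_m$) cleanly separated.
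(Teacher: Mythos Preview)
Your proof is correct. The paper does not give its own proof of this lemma---it is stated with a citation to Haussler~\cite{Haussler-95} and used as a black box in the overview of Appendix~\ref{app:Haussler_proof}---so there is no in-paper argument to compare against; that said, your fiber decomposition and the pairwise-distance-to-variance bridge $\sum_{\uu\neq\ww}|\uu_j-\ww_j|=2k_j^\vv(N_\vv-k_j^\vv)$ is exactly Haussler's original argument, carried out cleanly.
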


We now observe that when the entire sequence $I' = (i_1, \ldots, i_m)$ is chosen uniformly at random, 
then the bound in Lemma~\ref{lem:expected_var1} continues to hold when averaging on the entire sequence $I'$
(rather than just on $i_m$), that is, we have: 
\begin{equation}
  \label{eq:expected_var2}
  \EE_{I'} \left[ \Var\left(\V_{i_m} | \V_{i_1} \ldots, \V_{i_{m-1}} \right) \right] \ge
  \EE_{I'} \left[\frac{\delta}{2(n-m+1)} \left(1 - \frac{ |\V_{|_{I}}| }{|\V|} \right)\right] 
\end{equation}
$$
  =	
  \frac{\delta}{2(n-m+1)} \left(1 - \frac{ \EE_{I} [ |\V_{|_{I}}| ] }{|\V|} \right) .
$$
Note that under this formulation, $|\V_{|_{I}}|$ (the number of sets in the projection of $\V$ onto $I$) is a random variable 
that depends on the choice of $I = (i_1, \ldots, i_{m-1})$, in particular, since it does not depend on the choice of $i_m$, we have
$\EE_{I'} [ |\V_{|_{I}}| ] = \EE_{I} [ |\V_{|_{I}}| ]$.

The analysis of Haussler~\cite{Haussler-95} then proceeds as follows.
We assume that $\V$ is $\delta$-separated as in Lemma~\ref{lem:expected_var1} (then a bound on $|\V|$ is the actual bound
on the packing number), and then choose 
$$
m := \left\lceil{\frac{(2d_0 + 2)(n+1)}{\delta + 2d_0 + 2} }\right\rceil ,
$$
indices $i_1, \ldots, i_m$ uniformly at random without replacements from $[n]$
(without loss of generality, we can assume that $\delta \ge 3$ as, otherwise, we set the bound on the packing
to be $O(n^{d_1} k^{d-d_1})$, as asserted by the $(d,d_1)$ Clarkson-Shor property of $\V$.
Moreover, we can assume, without loss of generality, $n \ge d_0, \delta$, and thus we have $m \le n$).
Put $I' = \{i_1, \ldots, i_m\}$, $I = I' \setminus \{i_m\}$.
Then the analysis in~\cite{Haussler-95} combines the two bounds in Inequalities~(\ref{eq:cond_var}) and~(\ref{eq:expected_var2})
in order to derive an upper bound on $|\V|$, from which the bound in the Packing Lemma (Theorem~\ref{thm:packing}) is obtained.
Specifically, using simple algebraic manipulations, we obtain:
\begin{equation}
  \label{eq:bound_V_by_expectation}
  |\V| \le \frac{\EE_{I}\left[ |\V_{|_{I}}| \right]}{1 - \frac{2d_0(n-m+1)}{m\delta} }
\end{equation}
It has been shown in~\cite{Haussler-95} that due to our choice of $m$, we have:
\begin{equation}
  \label{eq:ratio}
\frac{2d_0(n-m+1)}{m\delta} \le \frac{d_0}{d_0 + 1} ,
\end{equation}
from which we obtain Inequality~(\ref{eq:packing_bound}), as asserted.
This completes the description of our first extension of Haussler's analysis, as described in Section~\ref{sec:prelim}.

In order to apply our second extension, we observe that one only needs to assume $m \le n$ when obtaining 
Inequalities~(\ref{eq:cond_var}) and~(\ref{eq:expected_var2}). In addition, the choice of $m$ in Inequality~(\ref{eq:bound_V_by_expectation})
can be made slightly larger, since the term $\frac{2d_0(n-m+1)}{m\delta}$ in Inequality~(\ref{eq:ratio}) is a decreasing function of $m$, 
as can easily be verified.
Recall that in our analysis we replace $m$ by $m_j := m \log^{(j)}{(n/\delta)}$, where $2 \le j \le \log^{*}{(n/\delta)}$,
in which case we still obtain $|\V| \le (d_0 + 1)\EE_{I_j}\left[ |\V_{|_{I_j}}| \right]$, where $I_j$ is a set of $m_j - 1$ indices chosen
uniformly at random without replacements from $[n]$.


\section{The First Iteration}
\label{app:compact_rep}

Our proof relies on the notions of ``relative approximations'' and ``$\eps$-nets'', define below:

\paragraph*{Relative approximations and $\eps$-nets.}

We mentioned in the introduction the notion of \emph{relative $(\eps, \eta)$-approximations}.
We now define them formally: Following the definition from~\cite{HS-11},
given a set system  $\V \subseteq \{0,1\}^n$ and two parameters, $0 < \eps <1$ and $0 < \eta < 1$,
we say that a subsequence $I$ of indices is a \emph{relative $(\eps, \eta)$-approximation} if it satisfies,
for each vector $\vv \in \V$,
$$
\left | \frac{\|\vv_{|_I}\|}{|I|} -  \frac{\|\vv\|}{n} \right| \le \eta\frac{\|\vv\|}{n}  ,
\quad \mbox{if $ \frac{\|\vv\|}{n}\ge \eps$,} \quad \mbox{and}
$$
$$
\left | \frac{\|\vv_{|_I}\|}{|I|} -  \frac{\|\vv\|}{n} \right| \le \eta\eps , \quad \mbox{otherwise,}
$$
where $\vv_{|_I}$ is the projection of $\vv \in \V$ onto $I$.

As observed by Har-Peled and Sharir~\cite{HS-11}, the analysis of Li~\etal~\cite{LLS-01} implies that if $\V$ has primal shatter dimension $d$,
then a random sample of $\frac{c d\log{(1/\eps)}}{\eps \eta^2}$ indices (each of which is drawn independently) is a relative 
$(\eps,\eta)$-approximation for $\V$  with constant probability, where $c > 0$ is an absolute constant.
More specifically, success with probability at least $1-q$ is guaranteed if one samples
$\frac{c (d\log{(1/\eps)} + \log{(1/q)})}{\eps \eta^2}$ indices.\footnote{We note that although in the original analysis for this bound 
  $d$ is the VC-dimension, this assumption can be replaced by having just a primal shatter dimension $d$; 
  see, e.g.,~\cite{Har-Peled-11} for the details of the analysis.}

It was also observed in~\cite{HS-11} that \emph{$\eps$-nets} 
arise as a special case of relative $(\eps,\eta)$-approximations.
Specifically, an $\eps$-net is a subsequence of indices $I$ with the property that any vector $\vv \in \V$ with $\|\vv\|\ge n \eps$
satisfies $\|\vv_{|_I}\| \ge 1$.
In other words, $N$ is a hitting set for all the ``long'' vectors.
In this case, if we set $\eta$ to be some constant fraction, say, $1/4$, then a relative $(\eps,1/4)$-approximation becomes
an $\eps$-net. Moreover, a random sample of $O\left(\frac{d\log{(1/\eps)} + \log{(1/q)}}{\eps}\right)$ indices (with an appropriate choice 
of the constant of proportionality) is an $\eps$-net for $\V$, with probability at least $1-q$; see~\cite{HS-11} for further details.

\paragraph*{Proof of Lemma~\ref{lem:compact_rep}.}
  In order to show (i), we first form the set system corresponding to all symmetric difference pairs induced by $\V$.
  That is, we form the vector set $\D$, where $\D = \{ (|\uu_1 - \vv_1|, \ldots, |\uu_n - \vv_n|) \mid \uu, \vv \in \V \}$.
  Since we assume that $\V$ is $\delta$-separated, we have $\|\ww\| \ge \delta$, for each $\ww \in \D$.

  We now construct an 
  \emph{$\eps$-net} for $\D$, with $\eps = \delta/n$. By our discussion above 
  a sample $I_1$ of $O(d (n/\delta) \log{(n/\delta)})$ indices has this
  property with probability greater than, say, $3/4$ (for a sufficiently large constant of proportionality).
  %
  Thus, by definition, any vector $\ww \in \D$ (recall that its length is at least $\delta$) 
  must satisfy $|\ww_{|_{I_1}}| \ge 1$, where $\ww_{|_{I_1}}$ denotes the projection of $\ww$ onto $I_1$. 
  %
  %
  But this implies that we must have $\uu_{|_{I_1}} \neq \vv_{|_{I_1}}$, for each pair $\uu, \vv \in \V$,
  and thus $\uu$, $\vv$ must be mapped to distinct vectors in the projection of $\V$ onto $I_1$,
  from which property (i) follows.

  In order to have property (ii) we observe that the same sample $I_1$ is also a \emph{relative $(\delta/n, 1/4)$-approximation}
  for $\V$ with probability at least $3/4$ (for an appropriate choice of the constant of proportionality). 
  Given this property of $I_1$, this implies that any vector $\vv \in \V$ satisfies
  $$
  \left| \frac{\|\vv_{|_{I_1}}\|}{m_1} - \frac{\|\vv\|}{n} \right| \le \frac{1}{4} \cdot \frac{\|\vv\|}{n} ,
  $$
  if $\|\vv\|/n \ge \delta/n$, and
  $$
  \left| \frac{\|\vv_{|_{I_1}}\|}{m_1} - \frac{\|\vv\|}{n} \right| \le \frac{1}{4} \cdot \frac{\delta}{n} ,
  $$
  otherwise.
  Since $\|\vv\| \le k$, and $k \ge \delta/2$ by assumption, it is easy to verify that we always have
  $\|\vv_{|_{I_1}}\| \le 3/2 \cdot k m_1/n$. In other words, $\|\vv_{|_{I_1}}\| = O(k\cdot m_1/n)$, as asserted.

  Combining the two roles of $I_1$ (each with probability $3/4$), 
  it follows that it is both a $(\delta/n)$-net for $\D$ and a relative $(\delta/n, 1/4)$-approximation for $\V$, with probability at least $1/2$,
  and thus it satisfies properties (i)--(ii) with this probability.
  This completes the proof of the lemma.
$\Box$

\section{Proof of Lemma~\ref{lem:exp_decay}}
\label{app:exp_decay}

\begin{proof}


  We first observe that the length of $\vv_{|_{I_j}}$ is a random variable with a hypergeometric distribution.
  Indeed, this is precisely the question of uniformly choosing $m_j-1$ elements at random (into our set $I_j$) from a given set 
  of $n$ elements without replacements, and then, for a given $\| \vv \|$-element subset of the full set (recall that the length of
  $\vv$ corresponds to the cardinality of an appropriate subset in the set system), we consider how many of its 
  elements have been chosen into $I_j$. Specifically, we have:
  $$
  \Prob\left[\| \vv_{|_{I_j}} \| = s \right] = \frac{{\| \vv\| \choose s} {{n -\| \vv\|} \choose {m_j - 1 - s} }}{ {n \choose {m_j - 1}} } ,
  $$
  for each non-negative integer $s \le \min\{\| \vv \|, m_j-1\}$.
  
  Our goal is to show a Chernoff-type bound over the probability that $\| \vv_{|_{I_j}} \|$ deviates from its expectation.
  However, we face the difficulty that the corresponding indicator variables are not independent, and thus we cannot apply a
  Chernoff bound directly (see, e.g.~\cite{AS-00}). Nevertheless, in our scenario a Chernoff bound is still applicable, 
  this can be viewed by various approaches, see, e.g.,~\cite{Doerr-11, Mulzer-note, PS-97}. 
  For the sake of completeness we describe the proof in detail, and rely on the analysis of Panconesi and Srinivasan~\cite{PS-97}, 
  which implies that when the underlying indicator variables are ``negatively correlated'', one can still apply a Chernoff bound 
  (see also~\cite{Doerr-11}).
  
  We enumerate all non-zero coordinates of $\vv$ in an arbitrary order, 
  let $L = \{l_1, \ldots l_{\|\vv\|} \}$ be this set of indices (in this notation we ignore all the zero-coordinates), 
  and attach an indicator variable $X_i$ to each index $l_i \in L$, which is defined to be one if and only if $l_i \in I_j$
  (in other words, the corresponding element in the underlying set induced by $\vv$ has been chosen to be included into the sample 
  of the $m_j-1$ elements). According to this notation, $\| \vv_{|_{I_j}} \|$ is represented by the sum $X = \sum_{i=1}^{\| \vv\|} X_{i}$.
  It is now easy to verify that $\Prob[X_{i} = 1] = (m_j - 1)/n$, and by linearity of expectation 
  $\EE [\| \vv_{|_{I_j}} \|] = \| \vv \| \cdot (m_j - 1)/n$.
  
  However, the variables $X_{i}$ are not independent due to our probabilistic model (that is, $I_j$ is chosen without replacements),
  albeit, they are \emph{negatively correlated}. This implies that for each subset $K \subseteq \{1, \ldots, \| \vv \| \}$
  $$
  \Prob[\bigwedge_{i \in K} X_i = 0] \le \prod_{i \in K} \Prob[X_i = 0] , 
  $$
  and
  $$
  \Prob[\bigwedge_{i \in K} X_i = 1] \le \prod_{i \in K} \Prob[X_i = 1] .  
  $$
  Indeed, following the considerations in~\cite{Doerr-11}, let us show first the latter inequality.
  Put $L_{K} = \bigcup_{i \in K}\{l_i\}$.
  Then $\Prob[\bigwedge_{i \in K} X_i = 1] = \Prob[L_{K} \subseteq I_j]$, and since $I_j$ is uniformly chosen, in order to bound the latter 
  we need to take the proportion between the number of subsets of size $m_j-1$ that contain $L_{K}$ and the entire number of subsets of size
  $m_j-1$ that can be chosen from an $n$-element set. 
  Hence
  $$
  \Prob[L_{K} \subseteq I_j] = \frac{{{n - |K|} \choose {m_j - 1 - |K|}}} {{n \choose {m_j-1}}} = 
  \frac{(m_j - 1)(m_j - 2) \cdots (m_j - |K|)} {n(n-1) \cdots (n-|K|+1)} ,
  $$
  and the latter is smaller than $\left(\frac{m_j - 1}{n}\right)^{|K|}$, as is easily verified.
  Using similar arguments for the first correlation inequality, we obtain 
  $$
  \Prob[\bigwedge_{i \in K} X_i = 0] =  \frac{{{n - |K|} \choose {m_j - 1}}} {{n \choose {m_j - 1}}} < \left(1 - \frac{m_j - 1}{n} \right)^{|K|} .
  $$

  We are now ready to apply~\cite[Theorem 3.4]{PS-97} stating that if the indicator variables $X_i$ are negatively correlated then
  (recall that $X = \sum_{i=1}^{\| \vv\|} X_{i}$)\footnote{We note that in the original formulation in~\cite{PS-97}, one needs to have a set of 
    \emph{independent} random variables $\hat{X}_{i}$, $i \in \{1, \ldots, \| \vv \| \}$ with $\hat{X} = \sum_{i=1}^{\| \vv\|} \hat{X}_{i}$, 
    such that $\EE[X] \le \EE[\hat{X}]$. In the scenario of our problem $\hat{X}_i$ is taken to be a Bernulli indicator random variable, 
    which takes value one with probability $(m_j - 1)/n$, in which case $\EE[X] = \EE[\hat{X} ] = \| \vv \| \cdot (m_j - 1)/n$. }:
  $$
  \Prob[X > \rho \EE[X] ] < \left(\frac{e^{\rho-1}} {\rho^{\rho}}\right)^{\EE[X]} ,
  $$
  for any $\rho > 1$. In particular, when $\rho \ge 2e$ (where $e$ is the base of the natural logarithm), the latter term
  is bounded by $2^{-\rho \EE[X] }$.
  Recall that we assumed $\| \vv \| \le k$, and thus $\EE[X] = \EE[\| \vv_{|_{I_j}} \|] \le  k \cdot (m_j - 1)/n$. 
  Thus, for any $t \ge 2e$, we obtain:
  $$
  \Prob\left[\| \vv_{|_{I_j}} \| > t k(m_j - 1)/n\right]  = 
  \Prob\left[\| \vv_{|_{I_j}} \| > \frac{t k(m_j - 1)/n}{\EE[\| \vv_{|_{I_j}} \|]} \cdot \EE[\| \vv_{|_{I_j}} \|] \right] ,
  $$
  observe that in this case $\rho := \frac{t k(m_j-1)/n}{\EE[\| \vv_{|_{I_j}} \|]} \ge 2e$, due to our assumption on $t$ and the fact that
  $\frac{k(m_j-1)/n}{\EE[\| \vv_{|_{I_j}} \|]} \ge 1$, and thus 
  the latter term is bounded by:
  $$
  2^{- \frac{t k(m_j - 1)/n}{\EE[\| \vv_{|_{I_j}} \|]}  \EE[\| \vv_{|_{I_j}} \|] } =  2^{-t k(m_j - 1)/n} ,
  $$
  as asserted.
\end{proof}

\section{Approximating the Minimum Spanning Tree}
\label{app:approx_mst}

Given a set system $(X,\Sigma)$ as in Section~\ref{sec:spanning_trees}, our goal is to approximate the spanning tree
with minimum conflicts, using the metric embedding approach of Har-Peled and Indyk~\cite{HI-02}. 
The original settings studied in~\cite{HI-02} are dual set systems of halfspaces and points (as in~\cite{Welzl-92}), 
however, most of the steps in their analysis can be applied in our case as well, and therefore we mainly emphasize the modifications 
required by our analysis.

Specifically, we proceed as follows. For any subset $P \subseteq X$, we form a mapping $f_P : \Sigma \rightarrow {\integers}$,
which maps a region (i.e., a set) in $\Sigma$ to a unique integer ID in the projection $\Sigma_{|_P}$ (see~\cite{HI-02} for the existence 
of such a mapping). Next, we embed the symmetric difference distance between each pair of sets $S, S' \in \Sigma$ to the Hamming space
${\integers}^\mu$, where $\mu = \polylog \{m\}$, using a collection $\P$ of $\mu$ random subsets $P_1, \ldots, P_\mu$ of $X$. 
Here, the Hamming space consists of all ID vectors $f_{\P}(S) = (f_{P_1}(S), \ldots, f_{P_{\mu}}(S))$, for each $S \in \Sigma$.
Then, a main property of the analysis in~\cite{HI-02} is to have a low distortion between the symmetric difference distance 
of pairs of sets $S$, $S'$ and the Hamming distance between their two corresponding vectors $f_{\P}(S)$, $f_{\P}(S')$
(which is the number of coordinates where they disagree).\footnote{We note that this part of the analysis in~\cite{HI-02} is sufficiently general to include our setting as well.}

Once the mapping $f_{\P}(\cdot)$ is computed, for all sets $S \in \Sigma$, 
one can use existing machinery to support dynamic approximate nearest-neighbor in order to compute $(1+\eta)$-approximation for the 
spanning tree in time $O(m^{1 + 1/(1+\eta)})$ (this is described in~\cite{HI-02} is detail, see also~\cite{Epp-95, IM-98, KOR-00}). 

Thus a main task is to compute $f_{\P}(\cdot)$ efficiently. Whereas the computation of $f_{\P}$ in~\cite{HI-02}
is based on the construction of many faces in an arrangement of lines (for planar settings) and an ``intersection-searching'' machinery
(for settings in higher dimensions), we need to resort to a \emph{range-counting} machinery (see, e.g.,~\cite{AE-97}). In this case one
can obtain a \emph{canonical representation} for the subset of points stored in each $S \in \Sigma$, and then produce the IDs based on the
canonical sets (rather than the explicit sets). Skipping the technical (and straightforward) details, we conclude that the running time for 
computing $f_{\P}(\cdot)$ is dominated, up to a polylogarithmic factor, by the task of computing, for each $S \in \Sigma$, 
the number of points that it contains from $P_1, \ldots, P_\mu$. By existing range searching machinery (see, e.g.,~\cite{AE-97}), 
it follows that this time is (i) $O^{*}(n^{d/(d+1)} m^{d/(d+1)} + n + m)$, if $\Sigma$ is a set of halfspaces or parallel slabs in $d$-space, 
and (ii) $O^{*}(n^{(d+1)/(d+2)} m^{(d+1)/(d+2)} + n + m)$, if $\Sigma$ is a set of balls in $d$-space. 
Thus from the above considerations we obtain the bounds stated in Corollary~\ref{cor:approx_mst}.

\end{document}